\documentclass[11pt]{article}

\usepackage[letterpaper, margin=.95in]{geometry}

\usepackage{amsmath, amsthm, amssymb}
\usepackage{cite}
\usepackage{url}
\usepackage{cleveref}
\usepackage{appendix}

\usepackage{times}

\usepackage{color}
\usepackage{algorithm}
\usepackage[noend]{algpseudocode}
\usepackage{changepage}

\usepackage{graphicx}
\usepackage{epstopdf}
\usepackage{multirow}

\usepackage{caption}

\usepackage{enumitem}
\setitemize{noitemsep, topsep=0pt,parsep=0pt,partopsep=0pt}
\algnewcommand\algorithmicswitch{\textbf{switch}}
\algnewcommand\algorithmiccase{\textbf{case}}

\algdef{SE}[SWITCH]{Switch}{EndSwitch}[1]{\algorithmicswitch\ #1\ \algorithmicdo}{\algorithmicend\ \algorithmicswitch}%
\algdef{SE}[CASE]{Case}{EndCase}[1]{\algorithmiccase\ #1}{\algorithmicend\ \algorithmiccase}%
\algtext*{EndSwitch}%
\algtext*{EndCase}%


%

\newcommand{\eps}{\varepsilon}

\newcommand{\floor}[1]{\lfloor #1 \rfloor}

\newcommand{\NumOfLayers}{L}
\newcommand{\set}[1]{\left\{#1\right\}}

\DeclareMathOperator{\E}{\mathbb{E}}

\newcommand{\layer}{\mathit{layer}}
\newcommand{\CDS}{\mathit{CDS}}

\newtheorem{theorem}{Theorem}[section]
\newtheorem{lemma}[theorem]{Lemma}

\newtheorem{proposition}[theorem]{Proposition}
\newtheorem{observation}[theorem]{Observation}
\newtheorem{definition}[theorem]{Definition}

\newenvironment{theoremR}[1]
  {\def\repthmref{#1}\theoremRinner (restated)}{\endtheoremRinner}
	\newenvironment{theoremRF}[1]
  {\def\repthmref{#1}\theoremRinner (restated and rephrased)}{\endtheoremRinner}

\newenvironment{observationR}[1]
  {\def\repthmref{#1}\observationRinner (restated)}{\endobservationRinner}


\usepackage{theomac} 
\newtheoremWithMacro{rtheorem}[theorem]{Theorem}



\newcommand{\FullOrShort}{short}

\ifthenelse{\equal{\FullOrShort}{full}}{
	
	  \newcommand{\fullOnly}[1]{#1}
	  \newcommand{\shortOnly}[1]{}
		\newcommand{\IncludePictures}[1]{#1}

  }{

	  \newcommand{\fullOnly}[1]{}
	  \newcommand{\shortOnly}[1]{#1}
		\newcommand{\IncludePictures}[1]{}

  }


\begin{document}

\date{}

\title{A New Perspective on Vertex Connectivity}

\author{
  Keren Censor-Hillel\\
  \small Technion\\
  \small Haifa, Israel\\
  \texttt{\small ckeren@cs.technion.ac.il}
  \and 
  Mohsen Ghaffari\\
  \small MIT\\
  \small Cambridge, MA, USA\\
  \texttt{\small ghaffari@mit.edu}
  \and
  Fabian Kuhn\\
  \small University of Freiburg\\
  \small Freiburg, Germany\\
  \texttt{\small kuhn@cs.uni-freiburg.de}
}

\maketitle

\begin{abstract}
Edge connectivity and vertex connectivity are two fundamental concepts in graph theory. Although by now there is a good understanding of the structure of graphs based on their edge connectivity, our knowledge in the case of vertex connectivity is much more limited. An essential tool in capturing edge connectivity are edge-disjoint spanning trees. The famous results of Tutte and Nash-Williams show that a graph with edge connectivity $\lambda$ contains $\floor{\lambda/2}$ edge-disjoint spanning trees.

We present connected dominating set (CDS) partition and packing as tools that are analogous to edge-disjoint spanning trees and that help us to better grasp the structure of graphs based on their vertex connectivity. The objective of the CDS partition problem is to partition the nodes of a graph into as many connected dominating sets as possible. The CDS packing problem is the corresponding fractional relaxation, where CDSs are allowed to overlap as long as this is compensated by assigning appropriate weights.  CDS partition and CDS packing can be viewed as the counterparts of the well-studied edge-disjoint spanning trees, focusing on vertex disjointedness rather than edge disjointness. We constructively show that every $k$-vertex-connected graph with $n$ nodes has a CDS packing of size $\Omega(k/\log n)$ and a CDS partition of size $\Omega(k/\log^5 n )$. We moreover prove that the $\Omega(k/\log n)$ CDS packing bound is existentially optimal.

CDS packing allows us to analyze vertex connectivity in the of context random vertex sampling. We show that if vertices of a $k$-vertex-connected graph are independently sampled with probability $p$, then the graph induced by the sampled vertices has vertex connectivity $\tilde{\Omega}(kp^2)$. This bound is optimal up to polylogarithmic factors. 

As an additional application, we also show that CDS packings are tightly related to the (throughput of) store-and-forward algorithms in the networking model where in each time unit, each node can send one bounded-size message to all its neighbors. As a consequence, our $\Omega(k/\log n)$ CDS packing construction yields a store-and-forward broadcast algorithm with optimal throughput.
\end{abstract}

\thispagestyle{empty}
\clearpage
\setcounter{page}{1}

%

\section{Introduction and Related Work}

Vertex and edge connectivity are two core graph-theoretic concepts as
they are basic measures for the robustness and throughput capacity of
a graph. While by now a lot is known about edge connectivity and its
connections to related graph-theoretic properties and problems, our
knowledge about vertex connectivity is much scarcer and many related problems remain open.

As an example, given a graph $G$, assume that each edge or node is
independently sampled with probability $p$. How large should $p$ be
such that the subgraph given by the sampled edges or the one induced
by the sampled nodes is connected (this is sometimes also known as the
\emph{network reliability}) or such that the sampled subgraph
satisfies some other properties. Intuitively, the larger the
connectivity of $G$ is, the smaller we should be able to choose $p$
such that the sampled subgraph remains connected. For edge
connectivity and sampling edges, Lomonosov and Poleskii \cite{LP71}
verified this intuition already four decades ago: if $p
=\Omega(\frac{\log n}{\lambda})$, where $\lambda$ is the
edge-connectivity of $G$, then the edge-sampled graph is connected
with high probability and this threshold is optimal. In the special
case of sampling edges of a complete graph, this corresponds to the
$\frac{\ln n}{n}$ probability threshold for connectivity in the
Erd\H{o}s-R\'{e}nyi random graph model. Karger \cite{KargerSTOC94}
showed that assuming $p =\Omega(\frac{\log n}{\lambda})$, the
edge-connectivity of the edge-sampled graph will be around $\lambda
p$, w.h.p., and in fact, for such $p$, the size of each edge cut
remains around its expectation. In the following years, these results,
and extensions thereof, have emerged as powerful tools, having
implications for numerous important problems, see
e.g. \cite{KargerSTOC94, KargerSODA94, KargerSTOC95, BenczurKarger96,
  KargerSTOC98}

In contrast, prior to our work, for vertex sampling (and vertex connectivity), even the most
basic of these questions remained open. In the present paper, we
prove results of the same flavor as the ones discussed above, but in
the context of vertex sampling rather than edge sampling. In
particular, we show that if each node of a $k$-vertex connected graph
$G$ is independently sampled with probability $p$, then w.h.p., the
graph induced by the sampled nodes has vertex connectivity at
least $\tilde{\Omega}(k p^2)$. We also show that this is existentially
tight up to $\log$-factors.\footnote{\label{footnote1}For exact statements and a more
  extensive discussion of our results, we refer to Section
  \ref{subsec:results}.} 

The main hurdle on the way to proving these results is that there
can be an exponential number of \emph{``small''} vertex cuts. When
arguing that the subgraph induced by a randomly chosen subset of nodes
is connected, one essentially needs to show that for each vertex cut
of the graph, at least one node is selected. 
However, it has been shown that even the number of the minimum vertex
cuts of a $k$-vertex connected graph can be as large as $\Theta(2^{k}
(n/k)^2)$~\cite{KanevskySODA93}. Note that this is in stark contrast to the
case of edge cuts, where the number of minimum edge cuts is known to
be bounded by $O(n^2)$ and the number of edge cuts of size $\alpha
\cdot \lambda$ in a graph with edge-connectivity $\lambda$ is at most
$O(n^{2\alpha})$~\cite{KargerSODA93, KargerSTOC96}. This $O(n^{2\alpha})$ bound is the main tool in studying edge cuts after
random sampling. \footnote{We remark that the (tightest) proofs of this bound use the edge-disjoint spanning tree results of Tutte and Nash-Willams \cite{KargerSTOC96}.} 


Our main technical contribution is a method to decompose a graph into
(almost) disjoint connected dominating sets (CDSs). We define a
\emph{CDS partition} of size $K$ as a partition of the nodes of a
graph into $K$ connected dominating sets. 
We also define \emph{CDS packings} as a natural
generalization of CDS partitions. A CDS packing is a collection of
CDSs with positive weights such that for each node $v$, the sum of the
weights of all CDSs containing $v$ is at most $1$. The size of a CDS
packing is the total weight of all CDSs in the collection.
We show that every $k$-vertex connected graph $G$ has a CDS packing of
size at least $\Omega(k/\log n)$. We also show that this is optimal in
the sense that for all $n$ and $k$, there are $k$-vertex connected
$n$-node graphs for which the largest CDS packing has size at most
$O(k/\log n)$. A generalized version of the CDS packing upper bound is
the basis for the random sampling result mentioned above and also for
a CDS partition upper bound showing that every $k$-vertex connected
graph has a CDS partition of size at least $\Omega(k/\log^5
n)$.

CDS partitions and CDS packings can be seen as the ``vertex world''
analogues of the well-studied edge-disjoint spanning trees. Note that
CDS partitions and CDS packings can equivalently be seen as
collections of vertex disjoint dominating trees and dominating tree
packings, respectively, by removing cycles after solving the
problem. Edge-disjoint spanning trees have been a classical problem in
graph theory (also studied under the title of decomposing a graph into
connected factors), and they have numerous applications for different
problems concerning edge-connectivity (see e.g., \cite{KargerSTOC94}).
In a graph with edge connectivity $\lambda$, the size of a maximum
spanning tree packing is clearly at most $\lambda$. Using the famous
results of Tutte and Nash-Williams~\cite{Tutte,Nash-Williams},
Kundu~\cite{Kundu} showed that the size of a maximum set of
edge-disjoint spanning trees is at least
$\floor{\frac{\lambda}{2}}$. Thus there is no asymptotic gap between
the size of the best edge-disjoing spanning tree collection (or
packing) and the edge connectivity. Our results show the corresponding
relation between vertex connectivity and dominating tree packing and
partition: the gap for dominating tree packing is $\Theta(\log n)$,
whereas the gap for dominating tree partition is lower bounded by
$\Omega(\log n)$ and upper bounded by $O(\log^5 n)$.

Apart from being interesting and natural structures of their own, CDS
partition and CDS packing also have interesting applications in
communication networks. Assume that in every time unit, each node of a
network can locally broadcast a message of at most $B$ bits to all its
neighbors. We show that the achievable total throughput when globally
broadcasting messages using store-and-forward routing
algorithms\footnote{Store-and-forward algorithms constitute the
  classical paradigm of message routing, where network nodes only
  forward the messages they receive from their neighbors, possibly
  with some changes in header information, without the use of network
  coding or any other form of combining different messages into new
  ones.}  can exactly be characterized by the size of the largest CDS
packing of the network graph. As a consequence, we get that in
$k$-vertex connected networks, such algorithms can achieve an optimal
throughput of $\Theta(k/\log n)$ messages per round. Techniques
of~\cite{HaeuplerSTOC11} show that network coding can achieve a
throughput of $\Theta(k)$. Thus, our results imply that the network
coding advantage for simultaneous broadcasts is $\Theta(\log n)$. Note
that determining the network coding advantage for different
communication models and problems is one of the important questions
when studying network coding (see, e.g.,~\cite{Li-Li-Lau,
  agarwal2004advantage, Langberg, Goel}).

\subsection{Results}\label{subsec:results}

To cope with the problem of facing exponentially many small vertex
cuts, we use a \emph{layering} idea. For all our results, we need to
find or prove the existence of a collection of small connected dominating
sets of some graph $H$ with large vertex connectivity. Whereas in all
cases, domination will be straightforward, obtaining connectivity is
much more challenging. We assume that the nodes of $H$ are partitioned
into $O(\log n)$ layers. We go through the layers one-by-one and
establish connectivity by growing components as we proceed. Growing
components turns out easier than proving connectivity directly and we
can show that using this step-wise approach, it suffices to consider
only polynomially many vertex cuts of $H$.


\subsubsection*{CDS Packing and CDS Partition}

As described, our core technical contribution is an efficient
algorithm to construct large CDS packings. As any connected dominating
set of a graph $G$ must contain at least one node from every vertex
cut of $G$, it is not hard to see that any CDS partition of a
$k$-vertex connected graph has size at most $k$. Using a similar
argument, it also follows that the largest CDS packing of a $k$-vertex
connected graph has size at most $k$. Our main result shows that this
basic upper bound can almost be achieved.

\begin{theorem}\label{thm:packing} Every $k$-vertex-connected $n$-node graph
  has a CDS packing of size $\Omega\big(\frac{k}{\log n}+1\big)$.
\end{theorem}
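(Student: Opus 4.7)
The plan is to directly construct a collection of $T=\Theta(k/\log n)$ pairwise vertex-disjoint connected dominating sets; giving each of them weight $1$ then produces a CDS packing of size $\Omega(k/\log n)$. The additive ``$+1$'' term in the theorem is handled separately in the regime $k=o(\log n)$, where the internal nodes of any spanning tree of $G$ already form a single CDS and yield a packing of size at least $1$; so throughout I assume $k=\Omega(\log n)$.

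First, I would set up a random labeling. Partition $V$ randomly into a ``dominator'' pool $D$ and a ``connector'' pool $R$ (for instance, by a fair coin per vertex). Independently give each vertex of $D$ a uniform color $i\in\{1,\dots,T\}$, and each vertex of $R$ a uniform label $(i,\ell)\in\{1,\dots,T\}\times\{1,\dots,L\}$ where $L=\Theta(\log n)$. Let $D_i\subseteq D$ collect the vertices of color $i$, let $R^i_\ell\subseteq R$ collect the vertices of label $(i,\ell)$, and define $S_i:=D_i\cup\bigcup_{\ell=1}^L R^i_\ell$. These candidate sets $S_1,\dots,S_T$ are vertex-disjoint by construction.

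Second, I would verify domination. Every vertex has at least $k$ neighbors, each of which lands in $D_i$ with probability $\Theta(1/T)=\Theta(\log n/k)$; a Chernoff bound together with a union bound over the vertices and the $T$ color classes shows that each $S_i$ dominates $V$ with probability $1-1/\poly(n)$.

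Third --- and this is the heart of the argument --- I would prove connectivity by a layer-by-layer component-growing argument. Let $S_i^\ell:=D_i\cup\bigcup_{j\le\ell}R^i_j$. The target invariant is that, as $\ell$ increases by one, the number of connected components of $G[S_i^\ell]$ shrinks by a constant factor; after $L=O(\log n)$ layers $G[S_i]$ therefore has a single component, making $S_i$ a CDS. The engine is a \emph{merging lemma}: provided the current components are each of sufficient size, $k$-vertex-connectivity (via Menger's theorem) guarantees $\Omega(k)$ internally vertex-disjoint paths of length $O(\log n)$ through $V\setminus S_i^{\ell-1}$ between any two current components; a fresh random layer of density $\Theta(\log n/k)$ then captures an entire such path with constant probability, so a constant fraction of component pairs get merged, with high probability.

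The main obstacle is the length control in the merging lemma: raw $k$-vertex-connectivity gives many disjoint paths but no bound on their length, so one has to additionally exploit the already-established domination property (so that most vertices outside the current $S_i^{\ell-1}$ are close to some component), for example via a BFS/ball-growth argument, to shorten those paths to $O(\log n)$. A second subtlety is which bad events to union-bound over: as stressed in the introduction, the layering is essential precisely because at each layer one only needs to control polynomially many pairs of current components, not the exponentially many vertex cuts of $G$. Stitching together the $O(\log n)$ layers, the $T$ parallel constructions, and a final union bound yields the desired family of CDSs and hence the claimed packing.
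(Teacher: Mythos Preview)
Your proposal has a genuine gap in the merging step. First, a quantitative point you underestimate: once domination is in place, the Menger paths can be shortened to have at most \emph{two} internal nodes, not $O(\log n)$---along any path from $\Psi(\mathcal{C})$ to $\Psi(S_i^{\ell-1}\setminus\mathcal{C})$ there must be two consecutive internal vertices where one is dominated by $\Psi(\mathcal{C})$ and the other by its complement. This is exactly the paper's argument, and it is what makes the per-path capture probability tractable. Second, and more seriously, your density bookkeeping is off. With $T=\Theta(k/\log n)$ colors and $L=\Theta(\log n)$ layers, each cell $R^i_\ell$ has density $\Theta(1/(TL))=\Theta(1/k)$, not $\Theta(\log n/k)$. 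So a two-internal-node path is captured with probability $\Theta(1/k^2)$, and the $k$ disjoint Menger paths yield only $\Theta(1/k)$ expected connectors per component---nowhere near the constant you need for geometric decay of the component count. No choice of $L$ fixes this: the total density available to class $i$ in the connector pool is only $\Theta(\log n/k)$, and squaring it and multiplying by $k$ gives $\Theta(\log^2 n/k)$.

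The paper avoids this by \emph{not} pre-assigning connector vertices to classes. Instead it passes to a virtual graph with $\Theta(\log n)$ copies of each vertex and builds $t=\Theta(k)$ vertex-disjoint CDSs there; projecting back with weight $1/\Theta(\log n)$ per CDS gives a packing (not a partition) of size $\Omega(k/\log n)$. In each layer, all new virtual nodes are available to every class, so each component has $k\ge t$ connector paths for free; the entire difficulty shifts to a \emph{greedy} assignment of layer-$(\ell{+}1)$ nodes to classes, analyzed by an accounting argument that charges at most $3\Delta t$ removed connector paths per $\Delta$ components merged. Your construction, by contrast, commits each vertex to a single class up front and is therefore really attempting a CDS \emph{partition} of size $\Omega(k/\log n)$---strictly stronger than the paper's partition bound of $\Omega(k/\log^5 n)$, and not something the random-labeling approach can deliver without the adaptive assignment step.
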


Specifically, we show how to construct a collection of $k$ CDSs, each consisting of $O\big(\frac{n\log n}{k}\big)$ nodes, such that each node is in at most
$O(\log n)$ of the CDSs. Based on the CDS packing construction, we
also obtain an efficient way to get a large CDS partition leading to
the following result.

\begin{theorem}\label{thm:partition} Every $k$-vertex-connected $n$-node graph
  has a CDS partition of size $\Omega\big(\frac{k}{\log^5 n}+1\big)$.
\end{theorem}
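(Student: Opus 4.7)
The starting point is Theorem~\ref{thm:packing}, which furnishes a collection of $k$ CDSs $C_1,\dots,C_k$, each of size $O(n\log n/k)$, such that every vertex lies in at most $d=O(\log n)$ of them. My plan is to convert this ``almost-disjoint'' family into a genuinely vertex-disjoint partition while losing only $O(\log^4 n)$ in the size, thereby obtaining $T=\Omega(k/\log^5 n)$ pairwise-disjoint CDSs.

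The first step is \emph{random grouping}: partition $[k]$ uniformly at random into $T$ groups $\Gamma_1,\dots,\Gamma_T$ of expected size $s=\Theta(\log^5 n)$, and set $U_j=\bigcup_{i\in\Gamma_j}C_i$. Each $U_j$ is itself a CDS of $G$ because it contains every $C_i$ in the group, and by Chernoff plus the multiplicity bound $d=O(\log n)$, every vertex lies in only $O(\log n)$ of the $U_j$'s. Crucially, each $U_j$ is ``thick'': it is the union of $\Omega(\log^5 n)$ CDSs, so it offers substantial internal redundancy.

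The second step is \emph{uniquification}: for each vertex $v$, choose (uniformly at random, or by a fixed rule) a single group $j(v)$ among those with $v\in U_j$, and let $D_j=\{v:j(v)=j\}$. By construction the $D_j$'s are pairwise disjoint, so all that remains is to show that each $D_j$ is a CDS of $G$. Domination is the easier half: a vertex $v$ outside $D_j$ has at least $k$ neighbors in $G$, and one can argue via a straightforward union-bound/Chernoff analysis that with high probability many of those neighbors land in $D_j$. Connectivity of $G[D_j]$ is the real challenge.

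The plan for connectivity is to invoke (a suitable generalization of) the layered argument that underlies Theorem~\ref{thm:packing} \emph{inside} each $U_j$. Since $U_j$ contains $\Omega(\log^5 n)$ CDSs and each vertex of $U_j$ is in at most $O(\log n)$ competing $U_{j'}$'s, a $1-O(1/\log^4 n)$ fraction of the CDSs constituting $U_j$ survive the uniquification within $D_j$, leaving enough redundant ``layered'' structure to reconnect the surviving vertices. Here is where the main obstacle lies: one must show that the auxiliary graph on which the layered argument is re-run still has enough vertex connectivity after the uniquification step, and this is precisely what forces the group size to be polylogarithmic rather than constant — thus the $\log^5 n$ loss. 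Once this connectivity claim is established inside every $U_j$, the $D_j$'s form a CDS partition of the required size.
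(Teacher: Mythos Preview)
Your approach diverges from the paper's, and the connectivity step contains a real gap. The paper does \emph{not} reduce Theorem~\ref{thm:partition} to the packing of Theorem~\ref{thm:packing}; it builds the partition directly on $G$ by \emph{random layering} (each vertex independently picks a layer in $\{1,\dots,L\}$ and a type in $\{1,2,3\}$), scatters the layer-$1$ vertices among $t=\Theta(k/\log^5 n)$ classes, and then assigns the later layers greedily via connector paths exactly as in Section~\ref{subsec:recurse}. The new ingredient is the Connector Abundance Lemma (Lemma~\ref{lem:CAL-Part}), and its proof \emph{invokes} Theorem~\ref{thm:general}: the not-yet-used layers $\ell+1,\dots,L$ form a random $\geq 1/2$ sample of $V$, hence induce a subgraph of vertex connectivity $\Omega(k/\log^3 n)$, which supplies $\Omega(k/\log^5 n)$ internally vertex-disjoint connector paths per component after paying the $\Theta(1/L^2)$ probability that the internal nodes land in the right layer and type. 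So the sampling theorem is an input to the partition proof, not a byproduct.

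The specific claim in your plan --- that ``a $1-O(1/\log^4 n)$ fraction of the CDSs constituting $U_j$ survive the uniquification within $D_j$'' --- is false under any natural reading. A vertex $v\in C_i$ with $i\in\Gamma_j$ that also lies in some $C_{i'}$ with $i'\in\Gamma_{j'}$, $j'\neq j$, stays in $D_j$ with probability only $\Theta(1/\log n)$; since $|C_i|$ may be $\Theta(n\log n/k)$, essentially \emph{no} $C_i$ remains intact inside $D_j$, so there is no surviving layered skeleton on which to re-run anything. Your domination sketch is also too quick: the $k$ neighbors of an outside vertex are neighbors in $G$, not in $U_j$, so a Chernoff bound over them does not directly produce a neighbor in $D_j$. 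There \emph{is} a way to repair the grouping-plus-uniquification idea --- observe via pigeonhole that $G[U_j]$ itself has vertex connectivity at least $s/d=\Omega(\log^4 n)$ (removing fewer than $s/d$ vertices leaves some $C_i$ untouched, which then connects and dominates the rest of $U_j$), and then apply Theorem~\ref{thm:threshold} to $G[U_j]$ with per-vertex retention probability $\geq 1/d=\Omega(1/\log n)$ --- but that is a different argument from the one you sketched, and the constants need care.
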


In addition, we complement these results by showing that our $\Omega\big(\frac{k}{\log n}\big)$ CDS
packing bound is asymptotically optimal in general. 

\begin{theorem} \label{thm:BadExample} For any sufficiently large $n$,
  and any $k$, there exist $n$-node graphs with vertex connectivity $k
  \geq 1$ where the maximum CDS packing size, and thus also the maximum CDS partition size, are $O\big(\frac{k}{\log
    n}+1\big)$.
\end{theorem}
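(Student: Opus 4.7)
My plan starts from the elementary inequality $\sum_i w_i\, |C_i| \le n$, obtained by summing the packing constraints $\sum_{C \ni v} w_C \le 1$ over all vertices, which shows that any CDS packing of $G$ has total weight at most $n/s_{\min}(G)$, where $s_{\min}(G)$ denotes the smallest CDS of $G$. Equivalently, by LP duality, the maximum CDS packing equals the minimum fractional CDS cover, $\min \sum_v y_v$ over $y \ge 0$ satisfying $\sum_{v \in C} y_v \ge 1$ for every CDS $C$. It therefore suffices to exhibit, for each $n$ and $k$, an $n$-vertex $k$-vertex-connected graph $G$ admitting a fractional CDS cover of total weight $O(k/\log n + 1)$.

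For the main regime $k \ge \log n$, I would take $G$ to be a uniformly random $k$-regular graph on $n$ vertices. Classical results on random regular graphs guarantee that $G$ is $k$-vertex-connected with high probability. For the fractional cover I use the uniform weighting $y_v = 1/s_{\min}(G)$ and lower bound $s_{\min}(G)$ by a first-moment argument: for a fixed candidate subset $S \subseteq V$ of size $s$, each vertex $v \notin S$ has its $k$ random neighbors landing outside $S$ with probability $\approx (1-s/n)^k \le e^{-sk/n}$, so the probability that $S$ dominates $V$ is at most $(1-e^{-sk/n})^{n-s}$. Taking $s \le c\cdot n\log n/k$ with $c$ sufficiently small, this is small enough that a union bound over the $\binom{n}{s} \le n^s$ choices of $S$ still yields $o(1)$. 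Combined with $k$-vertex-connectivity, this gives $s_{\min}(G) = \Omega(n\log n/k)$ w.h.p., and hence packing at most $O(k/\log n)$.

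The main obstacle is the regime $k = O(\log n)$, where the target bound is $O(1)$. Here the minimum-CDS argument alone is insufficient, since every $k$-connected graph admits a CDS of size $O(n\log k/k)$ (by a greedy construction), so the uniform fractional cover can only achieve $\Omega(\log n/\log\log n) = \omega(1)$. Overcoming this requires a non-uniform fractional CDS cover, and I expect the construction of a suitable graph to be the hardest step. A natural direction is to build $G$ together with a ``bottleneck'' set $T$ of $O(k+\log n)$ vertices such that every CDS of $G$ contains $\Omega(\log n)$ vertices of $T$---for instance, by layering $\Theta(\log n)$ structural bottlenecks so that every CDS must cross each one---which yields the dual solution $y_v = 1/\log n$ on $T$ of total weight $O(k/\log n + 1)$. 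Verifying this bottleneck-intersection property against every possible CDS while simultaneously maintaining $k$-vertex-connectivity is where I expect the bulk of the technical difficulty to lie, and it would likely be handled by an explicit layered construction analogous to (but dual to) the layered CDS construction used in the proof of Theorem~\ref{thm:packing}.
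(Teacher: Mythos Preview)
Your random-regular-graph argument does not deliver the claimed bound across the stated range $k\ge\log n$. Every graph of minimum degree $k$ admits a connected dominating set of size $O\big(\frac{n\log k}{k}\big)$ (greedy domination, then connect the pieces via length-$\le 3$ paths); you even invoke this fact yourself for the small-$k$ regime. Hence $s_{\min}(G)=O\big(\frac{n\log k}{k}\big)$ for \emph{every} $k$-regular $G$, random or not, and the uniform dual $y\equiv 1/s_{\min}$ can certify at best a packing bound of $O(k/\log k)$. This matches $O(k/\log n)$ only when $k=n^{\Omega(1)}$; for $k$ polylogarithmic in $n$ your first-moment union bound genuinely fails (with $s=c\,n\log n/k$ and $k=(\log n)^{O(1)}$, the entropy term $\binom{n}{s}$ overwhelms the domination probability $\exp(-n^{1-c})$). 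So there is no ``easy'' regime here: the entire theorem rests on the bottleneck construction you defer to a sketch.

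That bottleneck idea is exactly what the paper does, with a clean explicit construction. Take a clique $\mathcal{A}$ on $2k$ vertices and add one vertex for each $k$-element subset of $\mathcal{A}$, adjacent precisely to that subset; the resulting graph $\mathsf{H}$ has $n\in[2^k,4^k]$ vertices and vertex connectivity exactly $k$. Any CDS must contain at least $k+1$ vertices of $\mathcal{A}$, since otherwise it misses some $k$-subset of $\mathcal{A}$ and the corresponding second-layer vertex is undominated. Summing the packing constraint over $v\in\mathcal{A}$ then gives $(k+1)\sum_\tau x_\tau\le|\mathcal{A}|=2k$, so the packing is below $2$; as $k=\Theta(\log n)$ for $\mathsf{H}$, this is the desired $O(k/\log n)$. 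To realize $n\ll 2^k$ the paper keeps all of $\mathcal{A}$ and randomly subsamples the second layer, using the probabilistic method to guarantee that no $\beta$-subset of $\mathcal{A}$ dominates the surviving vertices, with $\beta=\Theta(\log n)$; then every CDS uses more than $\beta$ vertices of $\mathcal{A}$ and the same summation yields packing $O(k/\beta)=O(k/\log n)$.
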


\subsubsection*{Vertex Connectivity and Random Sampling:}

As a specific application of (a generalized version of) our CDS
packing construction, we obtain a tool to analyze the graph that is
obtained when randomly sampling a subset of the nodes of a graph with
large vertex connectivity. Mainly, we prove a lower bound on the
vertext connectivity of the graph induced by a set of randomly sampled
nodes. Note that in the following, for a graph $G=(V,E)$ and set of
nodes $S\subseteq V$, $G[S]$ denotes the subgraph of $G$ induced by
$S$.

\begin{theorem}\label{thm:general}
  Consider a $k$-vertex-connected, $n$-node graph $G=(V,E)$ and let
  $S$ be a subset of $V$ where each node $v\in V$ is included in $S$
  (i.e., \emph{sampled}) independently with probability $p$. W.h.p.,
  the graph $G[S]$ has vertex-connectivity $\Omega\big(\frac{k p^2}{\log^3
    n}\big)$.\footnote{Note that the theorem requires $k=\Omega(\log^3 n)$
    to be meaningful. Such a polylogarithmic lower bound on the vertex
    connectivity is necessary for all our statements to become
    non-trivial.  We would like to point out that this is in all cases
    necessary as, e.g., shown by Observation \ref{obs:neg-side}.}
\end{theorem}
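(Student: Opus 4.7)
The plan is to reduce the theorem to constructing a (fractional) CDS packing directly inside $G[S]$ of total weight $\Omega(kp^2/\log^3 n)$. By the same observation that upper-bounds CDS packing size by vertex connectivity (every CDS intersects every vertex cut), a packing of total weight $W$ in $G[S]$ with each node carrying at most unit weight forces every vertex cut of $G[S]$ to have size at least $W$. So it suffices to produce a packing of $G[S]$ of the claimed size.

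To build such a packing, I would adapt and generalize the layered construction behind Theorem~\ref{thm:packing}. Recall that that construction partitions the host graph into $O(\log n)$ layers and produces many CDSs simultaneously by sweeping through the layers, growing each CDS one node per layer while arguing about only a polynomial family of "relevant" cuts induced by the component structure built so far. The generalization I would use restricts the construction to $S$: the nodes added to each CDS must lie in $S$, and the resulting set must both dominate $S$ (every $v \in S$ has an $S$-neighbor in the CDS) and be connected in $G[S]$. The point is that this is essentially the same construction applied to the $k$-connectivity of $G$, but filtered through the sampled set $S$.

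The two factors of $p$ (and two of the $\log n$ factors) come from two standard Chernoff/union bound losses. First, when extending a given CDS into the next layer, the pool of candidate nodes is thinned by a factor of $p$, so we need each step to have $\Omega(\log n / p)$ candidates before sampling in order for at least one sampled candidate to be available with high probability. Second, for each node of $S$ to be dominated in $G[S]$ by a given CDS, it must have a sampled CDS-neighbor; the construction must therefore arrange $\Omega(\log n / p)$ CDS-neighbors per node. Each requirement reduces the maintainable number of CDSs by a factor $p/\log n$, and multiplying both losses into the $\Omega(k/\log n)$ starting bound gives $\Omega(kp^2/\log^3 n)$.

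The main obstacle, as in the unsampled case but more acutely, is the potentially exponential number of small vertex cuts of $G[S]$: a naive union bound is hopeless. The decisive payoff of the layered construction is exactly that it never needs to enumerate vertex cuts of the final graph; at each layer the relevant collection of "component cuts" is polynomial in $n$, determined by the partial structure built so far. The argument then reduces to a Chernoff and union bound over this polynomial family across $O(\log n)$ layers, which is absorbed by the remaining $\log n$ slack. A secondary technical point, morally analogous to Karger's edge-sampling concentration but stated only for the polynomial family of cuts exposed by the layering, is verifying that every relevant cut of $G$ of size $\Omega(k)$ contributes $\Omega(kp)$ sampled witnesses in $S$ with high probability, so that the extension step inside $G[S]$ can always succeed.
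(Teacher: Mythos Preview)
Your high-level plan matches the paper's: witness the vertex connectivity of $G[S]$ by a CDS packing of size $\Omega(kp^2/\log^3 n)$, built via the layered construction of Section~\ref{sec:general}. But two ingredients are misidentified or missing.

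First, your accounting of the $p^2$ loss is off. Domination is not one of the two bottlenecks: with $t=\Theta(kq^2)$ classes and $q=\Theta(p/\log n)$, every node already has $\Omega(\log n)$ class-$i$ neighbors in the first half of the layers, so domination comes essentially for free. Both factors of $q$ arise in the \emph{connectivity} step: Menger's theorem together with domination gives each component $k$ internally vertex-disjoint potential connector paths with at most \emph{two} internal nodes, and each internal node is a fresh layer-$(\ell{+}1)$ virtual sample hit with probability $q$, so each path survives with probability $\ge q^2$ (Lemma~\ref{lem:CAL}). That is why $t$ can be $\Theta(kq^2)$ but no larger; the third $\log n$ comes from projecting the virtual partition back to a real packing. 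Second, and more seriously, you omit the mechanism that lets $t$ classes make progress \emph{simultaneously} in a layer. A sampled layer-$(\ell{+}1)$ node may lie on connector paths of many components across many classes but can join only one class; the phrase ``growing each CDS one node per layer'' does not address this contention. The paper resolves it with the three-stage greedy assignment of Section~\ref{subsec:recurse} and the budget argument behind the Fast Merger Lemma (Lemma~\ref{lem:FML2}): each assignment that makes $\Delta$ components good destroys at most $3\Delta t$ connector paths from a global budget of $M_\ell t$, so at least $M_\ell/3$ components merge per layer. The minimality condition (C) and the type-$1/2/3$ node split exist precisely to make this bookkeeping go through (Proposition~\ref{prop:longpaths}). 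Without that argument you have no way to conclude that more than a single class reduces its component count in any given layer. A smaller but real gap: you treat $S$ as fixed before layering, which destroys the independence you need between layers; the paper recovers it via the virtual-graph coupling, sampling $3L$ copies each with probability $q$ and declaring a real node sampled iff some copy is.
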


A simple intuitive argument shows that up to logarithmic factors, the
statement of \Cref{thm:general} is the best possible. Consider a graph
consisting of two cliques of size $k$, connected via a matching of $k$
edges. When randomly choosing each node with probability $p$, the
vertex connectivity of the graph induced by the chosen nodes is given
by the number of surviving matching edges. As each of these edges
survives with probability $p^2$ and there are $k$ such edges, the
expected vertex connectivity after sampling, even in this simple graph, is $kp^2$. Theorem
\ref{thm:general} also gives a lower bound on $p$ such that the
induced graph of the sampled nodes is connected, w.h.p. A slightly
tighter bound for this specifc case is given by the following
theorem. 

\begin{theorem}\label{thm:threshold}
  Consider a $k$-vertex-connected, $n$-node graph $G=(V,E)$ and let
  $S$ be a subset of $V$ where each node $v\in V$ is included in $S$
  (i.e., \emph{sampled}) independently with probability $p$. If $p >
  \alpha\frac{\log n}{\sqrt{k}}$ for a sufficiently large constant
  $\alpha$, then $G[S]$ is connected, w.h.p.
\end{theorem}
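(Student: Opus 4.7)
The plan is to apply Theorem~\ref{thm:packing} to obtain a structured CDS packing of $G$ and then push through a direct probabilistic analysis. The reason we get the sharper threshold $p \geq \alpha\log n/\sqrt k$ (instead of the roughly $\log^{3/2} n/\sqrt k$ that would follow from asking Theorem~\ref{thm:general} to yield $1$-connectivity) is that we only need connectivity, not high connectivity, so a single surviving dominating/separating witness suffices, saving one logarithmic factor.

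Concretely, I would first apply Theorem~\ref{thm:packing} to obtain a family $D_1,\ldots,D_k$ of connected dominating sets of $G$, each of size $O(n\log n/k)$, such that every vertex of $G$ lies in at most $O(\log n)$ of the $D_i$. Next, suppose for contradiction that $G[S]$ is disconnected. Then there is a proper nonempty set $A\subsetneq V$ with $A\cap S$ and $(V\setminus A)\cap S$ both nonempty such that the boundary $T := N_G(A)\setminus A$ is entirely unsampled, i.e., $T\subseteq V\setminus S$. Since $G$ is $k$-connected, $|T|\geq k$. Because each $D_i$ is connected and dominates both sides of the cut, every $D_i$ must contain at least one vertex of $T$ (any path in $D_i$ joining a vertex dominating $A$ to one dominating $V\setminus A$ must cross $T$). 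Hence, for any fixed such cut,
\[
\Pr[T\cap S = \emptyset] \;=\; (1-p)^{|T|} \;\leq\; e^{-pk} \;\leq\; n^{-\alpha\sqrt k},
\]
which is polynomially small in $n$ whenever $k = \Omega(\log^2 n)$ and $\alpha$ is a sufficiently large constant.

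To turn this per-cut bound into a union bound, the plan is to avoid enumerating arbitrary separators of $G$ (which can be exponentially many, as the paper emphasizes) by using the CDS family to encode each dangerous cut via a polynomial-sized signature. Specifically, each candidate separator $T$ can be represented by its intersections with the $D_i$'s; since each $D_i$ has only $O(n\log n/k)$ nodes and each separator hits each $D_i$, a careful choice of a subfamily of $O(\log n)$ CDSs yields a polynomial-sized catalogue of candidate signatures, each of which corresponds to an event of probability at most $n^{-\alpha\sqrt k}$, and a standard union bound closes the argument.

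The main obstacle will be this last step: making the CDS-based enumeration both polynomial in $n$ and consistent with the $(1-p)^k$ per-cut bound. If the signature approach is too delicate, a natural alternative is a BFS-style argument that starts from an arbitrary vertex in $S$ and shows inductively that each new BFS layer is reached with high probability, using the CDS packing to certify at each step that the current frontier cannot be simultaneously separated and unsampled. Either way, the crux is to extract exactly $1$-connectivity (not higher) from the CDS packing, which is precisely what allows the $\log n/\sqrt k$ threshold rather than the $\log^{3/2} n/\sqrt k$ one inherited from Theorem~\ref{thm:general}.
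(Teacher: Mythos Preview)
Your approach diverges from the paper's, and the step you flag as ``the main obstacle'' is a genuine gap that does not close along the lines you sketch.

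The paper does \emph{not} invoke Theorem~\ref{thm:packing} here; Section~\ref{sec:threshold} is presented first precisely to introduce the technique later reused for the packing result. The argument is incremental: split the sampling into $L=\Theta(\log n)$ independent layers (each a copy of $G$ sampled at rate $q\approx p/L$), show the first $L/2$ layers already dominate, and then prove a \emph{Fast Merger Lemma}: conditioned on any outcome of layers $1,\dots,\ell$, adding layer $\ell+1$ reduces the number of connected components by a constant factor with constant probability. The key step (Lemma~\ref{lem:paths}) uses Menger's theorem to find $k$ internally vertex-disjoint paths from any component $\mathcal C$ to the rest, shortens each to at most two internal vertices using domination, and observes that each such path is fully sampled in layer $\ell+1$ with probability $q^2\ge (\alpha/2\lambda)^2/k$; with $k$ disjoint paths, one survives with probability $\ge 1/2$. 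There is no union bound over vertex cuts anywhere.

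Your proposal, by contrast, rests entirely on such a union bound, which is exactly what the layering is designed to avoid. The per-cut bound $(1-p)^k\le n^{-\alpha\sqrt k}$ is correct, but the theorem is nontrivial only when $p\le 1$, i.e.\ $k\ge \alpha^2\log^2 n$, and in that regime $\alpha\sqrt k\log n\le k$, so $n^{-\alpha\sqrt k}\ge 2^{-k}$. Since the number of minimum vertex cuts can be $\Theta\big(2^k(n/k)^2\big)$, the naive union bound is useless precisely where it is needed. Your signature idea does not repair this: picking one witness $t_j\in T\cap D_{i_j}$ from each of $m=O(\log n)$ CDSs gives $(n\log n/k)^m=n^{\Theta(\log n)}$ signatures, while the event ``$t_1,\dots,t_m$ all unsampled'' has probability only $(1-p)^m\approx e^{-\alpha m\log n/\sqrt k}$, far too weak to beat $n^{\Theta(\log n)}$ unless $k=O(1)$. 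Taking full intersections $T\cap D_{i_j}$ as the signature is worse still (exponentially many possibilities). Your BFS alternative is in fact close in spirit to what the paper does, but the content is precisely in making ``each new layer merges a constant fraction of components'' rigorous via Menger plus domination---and that is the whole proof.
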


We prove \Cref{thm:threshold} in Section \ref{sec:threshold}. Apart
from being interesting by itself, we use this theorem to introduce the
basic proof structure that we also use for our other, more involved
results. The following observation shows that for most values of $k$,
\Cref{thm:threshold} is tight up to a factor of $O(\sqrt{\log n})$ (a
formal argument appears in \Cref{app7}).

\begin{observation}\label{obs:neg-side} For every $k$ and $n$, there
  exists an $n$-node graph $G$ with vertex connectivity $k$ such that
  if we independently sample vertices with probability $p\leq
  \frac{\sqrt{\log(n/2k)}}{\sqrt{2k}}$, then the subgraph induced by the
  sampled nodes is disconnected with probability at least $1/2$.
\end{observation}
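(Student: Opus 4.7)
The plan is to realize the ``two cliques joined by a $k$-matching'' intuition (sketched in the paper) by concatenating many such gadgets in a chain. Concretely, I would partition the $n$ vertices into $t:=\lfloor n/k\rfloor$ cliques $C_1,\dots,C_t$ of size $k$ (absorbing the $<k$ leftover vertices into one clique) and install a perfect matching of $k$ edges between $C_j$ and $C_{j+1}$ for each $j\in\{1,\dots,t-1\}$. A standard min-vertex-cut argument shows that the resulting graph has vertex connectivity exactly $k$: removing any interior clique $C_j$ separates the two halves, while any vertex set of size $<k$ leaves every clique with a surviving vertex and every matching with a surviving edge, so the remainder is connected.

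For the random sample $S$, let $E_j$ be the event that no matching edge between $C_j$ and $C_{j+1}$ has both endpoints in $S$. Since the $k$ matching edges are vertex-disjoint and each survives independently with probability $p^2$, $\Pr[E_j]=(1-p^2)^k$. The hypothesis $p^2\le\log(n/2k)/(2k)$, combined with the elementary inequality $(1-x)^k\ge e^{-2xk}$ (valid for $x\le\tfrac12$, i.e.\ in the regime $k\ge\log(n/2k)$), yields $\Pr[E_j]\ge 2k/n=2/t$. The events $E_1,E_3,E_5,\dots$ involve pairwise disjoint sets of cliques and are therefore mutually independent, so
\[
\Pr\Big[\bigcup_{j\text{ odd}} E_j\Big]\;\ge\;1-(1-2/t)^{\lfloor t/2\rfloor},
\]
which exceeds $\tfrac12$ for every $t\ge 2$ (the right-hand side tends to $1-e^{-1}\approx 0.63$ as $t\to\infty$ and is checked directly for the few small values of $t$).

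Whenever some $E_j$ occurs, every $G$-edge between $L_j:=C_1\cup\cdots\cup C_j$ and $R_j:=C_{j+1}\cup\cdots\cup C_t$ is absent in $G[S]$, so $G[S]$ is disconnected provided both $L_j\cap S$ and $R_j\cap S$ are nonempty. The main obstacle in completing the proof is this ``both sides nonempty'' caveat, which can fail when $p$ is so small that some cliques are empty with non-negligible probability. The cleanest way to handle both failure modes at once is to pass to the \emph{super-graph} on $[t]$ in which super-vertex $j$ is present iff $C_j\cap S\ne\emptyset$ and super-edge $(j,j+1)$ is present iff the matching between $C_j$ and $C_{j+1}$ survives: $G[S]$ is connected exactly when the present super-vertices form an interval and every super-edge within that interval is present. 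Hence \emph{either} a missing super-edge between two present super-vertices \emph{or} a gap of missing super-vertices between two present ones forces a disconnection of $G[S]$. Rerunning the odd-indexed independence estimate against this richer failure event---and settling the degenerate small-$t$ (or atypically large-$k$) cases by direct calculation---gives the desired disconnection probability of at least $\tfrac12$.
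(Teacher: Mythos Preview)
Your approach is essentially identical to the paper's: the same chain of $k$-cliques joined by perfect matchings, the same matching-failure events $E_j$, the same independence via alternate indices (the paper takes even $j$, you take odd), and the same estimate $(1-p^2)^k\ge 2k/n$ leading to $\Pr[\text{all alternate matchings survive}]<1/e<\tfrac12$. The only noteworthy difference is that the paper disposes of your ``both sides nonempty'' caveat in a single sentence---``some nodes of each clique are sampled w.h.p.''---implicitly relying on the slack between $1-1/e$ and $\tfrac12$ to absorb that failure probability, whereas you flag the issue explicitly and sketch a super-graph reformulation; both treatments are informal on this point, and neither fully addresses the degenerate very-small-$p$ regime.
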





\subsection{Additional Related Work}

The domatic number of a graph is the size of the largest partition of
a graph into dominating sets. In \cite{feige02} it is shown that for
graphs with minimum degree $\delta$, nodes can be partition into $(1-o(1))(\delta+1)/\ln n$ dominating sets
efficiently. This implies a $(1+o(1))\ln n$-approximation, which is
shown to be best possible unless
$\mathrm{NP}\subseteq\mathrm{DTIME}(n^{O(\log\log n)})$. Further,
Hedetniemi and Laskar \cite{domination} present an extensive
collection of results revolving around dominating sets. The CDS
partition problem was first introduced in~\cite{HedetniemiLaskarCDN}
where the size of a maximum CDS partition of a graph $G$ is called the connected domatic number of $G$. Zelinka~\cite{Zelinka}
shows a number of results about the connected domatic number; in
particular, that it is upper bounded by the vertex connectivity. \cite{Hartnell-Rall} shows that the connected domatic
number of planar graphs is at most $4$ and also describes
some relations between the number of edges of a graph and its
connected domatic number. Finally, \cite{RotatingCDS} argues
that a large CDS partition can be useful for balancing energy usage in
wireless sensor networks.

\subsection{Roadmap}
The rest of the paper is organized as follows. In \Cref{sec:prelim},
we formally define CDS partition and packing and describe their
connections to vertex connectivity, as well as their networking
applications. In \Cref{sec:threshold}, we present the result about the
threshold on the vertex sampling probability for getting a connected
induced subgraph. The analysis there demonstrate the main structure of
our analysis for the later results in Sections \ref{sec:general} and
\ref{sec:partition}. In \Cref{sec:general}, we present the CDS packing
construction which is also used to derive our general bound on the
vertex connectivity after vertex sampling. In \Cref{sec:partition}, we
modify this construction to get the claimed CDS partition
results. Finally, \Cref{sec:discussions} presents some concluding
remarks.


\section{Vertex Connectivity and Connected Dominating Sets}
\label{sec:prelim}

In this section, we formally introduce CDS partitions and CDS
packings. The structures can be seen as a well-organized way to
capture the vertex connectivity of a graph $G$. We therefore believe
that they also provide a new perspective on vertex connectivity and
related problems. The section is organized as follows. We start out by
formally defining all necessary concepts. We then describe how CDS
partition and CDS packing relate to vertex connectivity. Finally, we
conclude the section by describing an application of the two
structures in a networking context.


\subsection{Definitions}
\paragraph{Notations} We usually work with a simple undirected graph
$G=(V,E)$ as our main graph. Troughout, we use $n$ for the number of
nodes of $G$. Also, for a subset $S \subseteq V$ of the nodes, we use
$G[S]$ to denote the subgraph of $G$ induced by $S$.
 
\begin{definition}[Dominating Set and Connected Dominating Set] Given
  a graph $G = (V, E)$, a set $S\subseteq V$ is called a dominating
  set iff each node $u \in V\setminus S$ has a neighbor in $S$. The
  set $S$ is called a connected dominating set (CDS) iff $S$ is a
  dominating set and $G[S]$ is connected.
\end{definition}
If $S$ is a dominating set of $G=(V,E)$, we also say that $S$
dominates $V$.

\begin{definition}[CDS Partition]  
  A \emph{CDS partition} of a graph $G=(V,E)$ is a partition $V_1
  \cup\dots\cup V_{t}=V$ of the nodes $V$ such that each set $V_i$ is
  a CDS. The \emph{size} of a CDS partition is the number of CDSs of
  the partition. The maximum size of a CDS partition of $G$ is
  denoted by $K_{\CDS}(G)$.
\end{definition}

\Cref{fig:CDS partition} presents a graph with a CDS partition of size
$2$, where nodes of each color form a CDS.

\begin{figure}[t]
  \centering
  \includegraphics[width=50 mm]{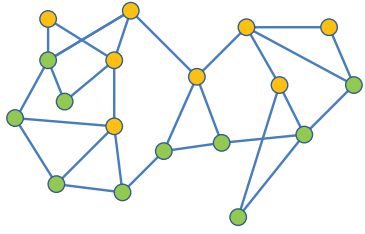}
  \caption{A CDS partition of size $2$}
  \label{fig:CDS partition}
	\vspace{-0.15cm}
\end{figure}

\begin{definition}[CDS Packing]
  Let $\CDS(G)$ be the set of all CDSs of a graph $G$. A \emph{CDS packing} of $G$
  assigns a non-negative weight $x_{\tau}$ to each $\tau \in \CDS(G)$
  such that for each node $v \in V$, {$\sum_{\tau \ni v }x_\tau\leq
    1$.}
  The \emph{size} of this CDS packing is $\sum_{\tau \in \CDS(G)}
  x_{\tau}$. The maximum size of a CDS packing of $G$ is denoted by
  $K'_{\CDS}(G)$.
\end{definition}
Note that a CDS partition is a special case of a CDS packing where
each $x_{\tau} \in \{0, 1\}$. In other words, CDS packing is the LP
relaxation of CDS partition when formulating CDS partition as an
integer programming problem in the natural way. Consequently, we have
$K_{\CDS}(G) \leq K'_{\CDS}(G)$ for every graph $G$.

We remark that the maximum CDS partition size of graph $G$ is
sometimes also called the \emph{connected domatic number} of
$G$\cite{HedetniemiLaskarCDN, Zelinka, Hartnell-Rall}. Analogously,
the maximum CDS packing size can be referred to as the
\emph{fractional connected domatic number} of $G$.

\subsection{CDS Packing and Vertex
  Connectivity}\label{subsec:packVSConn}

Menger's theorem tells us that a graph is $k$-vertex connected if and
only if every pair of nodes $u$ and $v$ can be connected through $k$
internally vertex-disjoint paths. A CDS partition or packing produces
analogous systems of paths, but with much stronger requirements: Given
a CDS partition of size $t$, each pair of nodes $u$ and $v$ is
connected by $t$ internally vertex-disjoint paths, one through each
CDS. Consequently, we can choose the vertex-disjoint paths for
different pairs in a consistent way in the following sense. The paths
for all the pairs can be colored by $t$ colors such that, paths of
different colors are internally vertex-disjoint and each node pair can
be connected by one path from each color. A CDS packing provides a
relaxed version of such a path system, where for each pair, paths are
allowed to have vertex-overlaps but the weighted overlap in each node
is bounded by one. These observations closely resemble edge-disjoint
spanning trees (see e.g. \cite{Tutte, Nash-Williams, Kundu}). A set of
$t$ edge-disjoint spanning trees gives rise to a similar colored
system of edge-disjoint (rather than internally vertex-disjoint)
paths.

In light of the above discussion, it is natural to ask how the maximum
CDS packing or partition sizes compare with the vertex connectivity of
a graph. One direction of this relation is straightforward. Since for
each vertex cut, each CDS has to contain at least one node of this
cut, we have $K_{\CDS}(G) \leq k$ \cite{Zelinka, Hartnell-Rall}. Based
on the same basic argument, the same upper bound also applies to CDS
packings (proof in \Cref{sec:BadExample}):

\begin{proposition}\label{prop:upp} For each graph with
  vertex-connectivity $k$, we have $K_{\CDS}(G) \leq K'_{\CDS}(G) \leq
  k$.
\end{proposition}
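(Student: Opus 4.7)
The plan is to prove the two inequalities separately. The first one, $K_{\CDS}(G) \le K'_{\CDS}(G)$, is essentially cost-free: given any CDS partition $V_1 \cup \dots \cup V_t = V$, define $x_{V_i} = 1$ for each $i$ and $x_\tau = 0$ for every other $\tau \in \CDS(G)$. The partition property guarantees that each node $v$ belongs to exactly one $V_i$, so $\sum_{\tau \ni v} x_\tau = 1$, which satisfies the packing constraint. The size of this packing equals $t$, so $K'_{\CDS}(G) \ge K_{\CDS}(G)$. This was also noted informally after the definition of CDS packing.

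For the harder inequality $K'_{\CDS}(G) \le k$, I would use a minimum vertex cut. Let $C \subseteq V$ be a vertex cut with $|C|=k$, so that $G[V \setminus C]$ splits into at least two components $A_1, A_2, \dots$. The key structural claim is that every CDS $\tau$ of $G$ satisfies $|\tau \cap C| \ge 1$. Indeed, if $\tau \subseteq V \setminus C$, then since $G[\tau]$ is connected, $\tau$ lies entirely in a single component $A_i$; but then any node $v$ in another component $A_j$ has all its neighbors in $A_j \cup C$, none of which lie in $\tau$, contradicting domination.

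With this claim in hand, the proof reduces to a short double-counting argument. For any CDS packing $\{x_\tau\}$,
\[
K'_{\CDS}(G) \;=\; \sum_{\tau \in \CDS(G)} x_\tau \;\le\; \sum_{\tau \in \CDS(G)} x_\tau \cdot |\tau \cap C| \;=\; \sum_{v \in C} \sum_{\tau \ni v} x_\tau \;\le\; \sum_{v \in C} 1 \;=\; k,
\]
where the first inequality uses $|\tau \cap C| \ge 1$ and the last inequality uses the packing constraint at each $v \in C$. Taking the maximum over all packings gives the desired bound.

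No step is really an obstacle here; the only substantive ingredient is the observation that every CDS must hit every vertex cut, which in turn rests on noting that a connected subgraph of $G - C$ is confined to a single component of $G - C$ and therefore cannot dominate the other side. Everything else is linear-programming bookkeeping.
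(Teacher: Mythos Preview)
Your proof is correct and follows essentially the same approach as the paper: take a minimum vertex cut $C$, observe that every CDS must intersect $C$, and bound the packing size by summing the packing constraints over the nodes of $C$. The only cosmetic difference is that the paper picks a single representative node $\mathit{Rep}(\tau)\in C$ for each CDS rather than using the full count $|\tau\cap C|$, but the double-counting is identical.
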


The other direction of the relation is more interesting. In the ``edge
world", the famous Tutte and Nash-Williams result \cite{Tutte,
  Nash-Williams} shows that each graph with edge connectivity
$\lambda$ contains at least $\lambda/2$ edge-disjoint spanning trees,
creating a colored system of edge-disjoint paths as explained above
without asymptotic loss compared to Menger's Theorem. Our main result
in this respect is \Cref{thm:packing}, which proves a corresponding
statement for the ``vertex world" by showing that any $k$-vertex
connected graph has a CDS packing of size $\Omega(k/\log n)$. Our
proof of \Cref{thm:packing}, presented in \Cref{sec:general}, directly
leads to an efficient construction of a CDS packing of size
$\Omega(k/\log n)$. As argued, CDS packings thus offer a new
structured way of looking at vertex connectivity. Unlike for edge
connectivity, the additional structure comes at a price as we loose an
$O(\log n)$-factor, which we show to be unavoidable (in
\Cref{thm:BadExample}).

As a testament to the strength of the structure, in the proof of
\Cref{thm:general}, we explain that the same general construction
produces a CDS packing of size $\tilde{\Omega}(kp^2)$ when nodes are
sampled with probability $p$. Up to logarithmic factors, this solves
the open question about the vertex connectivity after sampling. The
CDS packing acts as a witness for the vertex connectivity of the
sampled subgraph, showing that it is $\tilde{\Omega}(kp^2)$.

Also, in \Cref{sec:partition} we show that a similar construction produces a CDS partition of size $\Omega(\frac{k}{\log^5 n})$. We find it interesting that this CDS
partition construction itself uses \Cref{thm:general}, i.e., the sampling result.

\subsection{CDS Packing vs. Throughput}\label{subsec:packVSthru}
To conclude Section \ref{sec:prelim}, we explain that CDS packing and
CDS partition are relevant structures, also from a networking point of
view, as they are closely related to the throughput of
store-and-forward algorithms. A store-and-forward algorithm
corresponds to the classical paradigm of routing (in contrast to
network coding) where each node only stores and forwards packets and does not combine messages, or parts of them. Consider a
synchronous network model where in each round (time unit), each node
can send one message of size at most $B$ bits to all of its neighbors. Such
a communication model is motivated, e.g., when considering wireless
networks and working above the MAC layer, i.e., a local broadcast
layer (see e.g. \cite{AbsMAC}). In the described communication model,
CDS packing perfectly captures the throughput of store-and-forward
algorithms for concurrent global broadcasts:

\begin{theorem}\label{prop:thruConnection} A CDS packing with size $t$
  provides a store-and-forward backbone with broadcast throughput
  $\Omega(t)$ messages per round. Inversely, a store-and-forward
  broadcast algorithm with throughput $t$ messages per round induces a
  CDS packing of size at least $t$.
\end{theorem}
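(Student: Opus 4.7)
The plan is to prove the two directions of the theorem separately. For the forward direction I will convert a CDS packing into a pipelined broadcast schedule that realizes the claimed throughput; for the reverse direction I will extract a CDS packing from the message-forwarding pattern of any given store-and-forward broadcast algorithm.

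For the forward direction, let $\{x_\tau\}_{\tau \in \CDS(G)}$ be a CDS packing of size $t = \sum_\tau x_\tau$. I will first approximate the weights by rationals $x_\tau \approx a_\tau/N$ with a common denominator $N$. The central observation is that any single CDS $\tau$ on its own supports broadcast at rate one message per round: fix a spanning tree of $G[\tau]$ rooted at the source and pipeline messages along it; because $\tau$ dominates $V$, every non-CDS node $w$ has a neighbor in $\tau$ whose local broadcast delivers each message to $w$ in the same round the CDS node forwards it. To run the CDSs concurrently I will schedule each node $v$'s rounds in length-$N$ windows, allocating exactly $a_\tau$ rounds of every window to CDS $\tau$ for every $\tau \ni v$; this is feasible precisely because the packing constraint gives $\sum_{\tau \ni v} x_\tau \le 1$. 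Under this schedule CDS $\tau$ effectively runs at rate $x_\tau$, so the combined long-run throughput is $\sum_\tau x_\tau = t$, and the $O(n)$ pipelining warm-up per CDS is absorbed into the $\Omega(t)$.

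For the reverse direction, suppose an algorithm broadcasts $m$ messages in $T$ rounds, so its throughput is $m/T = t$. For each message $M$, let $S_M \subseteq V$ be the set of nodes that transmit $M$ at some round during the execution. I will show $S_M \in \CDS(G)$. Connectivity: every sender of $M$ other than its source $s_M$ must have first received $M$ from some other sender of $M$; chaining these received-from edges produces a spanning tree of $G[S_M]$ rooted at $s_M$. Domination: any $v \notin S_M$ never transmits $M$, yet must eventually know $M$, so it must receive $M$ via a local broadcast from some neighbor in $S_M$. I then define a weight assignment by
\[
  x_\tau \;=\; \frac{\bigl|\{M : S_M = \tau\}\bigr|}{T}
  \qquad \text{for every } \tau \in \CDS(G).
\]
For any node $v$, $\sum_{\tau \ni v} x_\tau$ counts, divided by $T$, the number of messages $v$ ever transmits; since $v$ sends at most one message per round this is at most $T/T = 1$. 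The total weight is $\sum_\tau x_\tau = m/T = t$, so this is a valid CDS packing of size at least $t$.

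The main obstacle will be the forward direction, specifically producing a concrete global schedule that respects the per-node one-broadcast-per-round constraint simultaneously for many overlapping CDSs of varying sizes and diameters. The rational rescaling plus round-robin interleaving handles the capacity issue, and the various per-CDS pipelining latencies can all be absorbed into a single $O(n)$ warm-up cost that is dwarfed by the steady-state rate. The reverse direction is conceptually clean but crucially leans on the store-and-forward hypothesis, since under network coding the ``sender set'' of an individual message is no longer well-defined; this is exactly why the tight throughput characterization is specific to the store-and-forward model.
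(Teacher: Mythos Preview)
Your proposal is correct and follows essentially the same approach as the paper: time-share each node among its CDSs proportionally to the packing weights for the forward direction, and for the reverse direction take the sender set $S_M$ of each message as a CDS and weight by message count over total time. Your reverse-direction weight $x_\tau = |\{M : S_M = \tau\}|/T$ is in fact a slight simplification of the paper's choice $x_\tau = \min_{v\in\tau}\sum_{M:S_M=\tau} N_M(v)/T$ (where $N_M(v)$ counts how many rounds $v$ transmits $M$); both satisfy the packing constraint via the one-message-per-round bound and both yield total weight at least $m/T$, but yours avoids the extra layer of per-node accounting.
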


Here we provide a brief intuition. The formal proof of
\Cref{prop:thruConnection} is deferred to \Cref{sec:App5}. For the
first part, consider a CDS partition of size $t$. Then, $t$ different
messages can be routed along different CDSs simultaneously and thus we
can concurrently make progress for $t$ messages (throughput $t$). To
achieve this using a CDS packing, each node $v$ is time-shared between
the CDSs which go through $v$ such that the duration that $v$ works
for each CDS $\tau$ is proportional to the weight $x_{\tau}$. For the
second part, consider a store-and-forward broadcast algorithm with
throughput $t$ and run it using a sufficiently large number of
messages. For each message $m$, the set of nodes that forward $m$
forms a CDS (otherwise, $m$ would not reach all the nodes). Choosing
the weight of each given CDS proportional to the number of messages
that use this CDS induces a CDS packing of size $t$.

Given this connection, we get that our CDS packing result
(\Cref{thm:packing}) also gives a store-and-forward broadcast
algorithm with optimal throughput. We remark that in the general
formulation of CDS packings, each node might participate in
arbitrarily many (in fact up to exponentially many) CDSs. This would
make CDS packing inefficient from a practical point of view if the
number of messages is small compared to the number of CDSs
used. Fortunately, in our construction (cf., \Cref{thm:packing}), each
node only participates in $O(\log n)$ CDSs, which makes the CDS
packing efficient even for a small number of messages.


\section{Vertex Connectivity Under Vertex Sampling: Connectivity Threshold}
\label{sec:threshold}
In this section, we study the threshold for the vertex sampling
probability such that the sampled graph \emph{remains connected}. This
allows us to demonstrate the main outline of our analysis technique in
a simple and clean way. In the next section, we extend the technique
to prove our general bounds on CDS packing and on the vertex
connectivity after sampling.

\begin{theoremR}{thm:threshold}
  Consider an arbitrary graph $G=(V,E)$ with vertex connectivity
  $k$. Let $S$ be a subset of $V$ where each node $v\in V$ is included
  in $S$ (i.e., \emph{sampled}) independently with probability $p$. If
  $p > \alpha\frac{\log n}{\sqrt{k}}$ for a sufficiently large
  constant $\alpha$, then, w.h.p., $S$ is a CDS of $G$ and thus,
  $G[S]$ is connected.
\end{theoremR}

Before proceeding with the actual proof of \Cref{thm:threshold}, we
need to introduce a few basic concepts. As a key feature of our
analysis, instead of trying to argue about the structure of the set of
all the sampled nodes at once, we turn the sampling into a more
evolving process, using a simple \emph{layering} idea. To show that a
set of randomly sampled nodes of a graph forms a CDS, we partition the
nodes of the graph into $\NumOfLayers$ layers and study how the
sampled structure evolves when going through the layers one-by-one. In
order for our arguments to work, we require the subgraph induced by
each layer to have large vertex connectivity. This is hard to achieve
when partitioning the nodes of the original graph $G$ into layers. Instead
of arguing directly about $G$, we therefore analyze the sampling
process for a graph $\mathcal{G}=(\mathcal{V},\mathcal{E})$ that we
call the \emph{virtual graph}\footnote{We remark here that this transformation to virtual graph $\mathcal{G}$ is not necessary in \Cref{sec:threshold} but it is needed in \Cref{sec:general}. In order to make the analysis of the two sections consistent and to get the readers used to the related notations before going into the complications of \Cref{sec:general}, we choose to use $\mathcal{G}$ in this section as well.} and which is defined as follows. For
each real node $v\in V$ (and a sufficiently large constant $\lambda$),
create $\NumOfLayers=\lambda \log n$ copies of $v$, one for each layer
$\ell$ in $[1, \NumOfLayers]$. Connect two virtual nodes if and only
if they are copies of the same real node or copies of two adjacent
real nodes. Note that for each layer $\ell$, the virtual nodes of
layer $\ell$ induce a copy of $G$. For each set of virtual nodes
$\mathcal{W} \subseteq \mathcal{V}$, we define the projection
$\Psi(\mathcal{W})$ of $\mathcal{W}$ onto $G$ as the set $W \subseteq
V$ of real nodes $w$, for which at least one virtual copy of $w$ is in
$\mathcal{W}$. Note that two nodes in $\mathcal{G}$ are connected if
and only if their projections are connected in $G$ or if they project to the same node, implying that
$\mathcal{G}[\mathcal{W}]$ is connected (or dominating) if and only if
$G[\Psi(\mathcal{W})]$ is connected (or dominating).

To translate the sampling to $\mathcal{G}$, we use a simple coupling
argument. Consider the following process: sample each virtual node
with probability $q=1-(1-p)^{1/\NumOfLayers}$ and then sample each
real node $v\in V$ if and only if at least one of its virtual copies
is sampled (i.e., the sampled real nodes are obtained by projecting
the sampled virtual nodes onto $G$). The probability of each real node
being sampled is $1-(1-q)^{\NumOfLayers} = p$. Henceforth, we work on
$\mathcal{G}$ assuming that each virtual node is sampled independently
with probability $q = 1-(1-p)^{1/\NumOfLayers} \geq
\frac{p}{2\NumOfLayers}$. Let $\mathcal{V}_\ell$ be the set of all
virtual nodes in layers $1,\dots,\ell$ and let $\mathcal{S}_\ell$ be
the set of all sampled virtual nodes from $\mathcal{V}_\ell$. We
define $N_\ell$ to be the number of connected components of
$\mathcal{G}[\mathcal{S}_\ell]$ and let $M_{\ell}= N_{\ell}-1$ (i.e.,
the excess number of components).

\paragraph{Proof Outline :} We first show that the sampled virtual
nodes of the first $\NumOfLayers/2$ layers already give
domination. Then, for each layer $\ell \geq \NumOfLayers/2$, we look
at components of $\mathcal{G}[\mathcal{S}_{\ell}]$ and show that
adding sampled virtual nodes of layer $\ell+1$ merges (in expectation)
a constant fraction of these components, each with at least one other
component, while not creating new components. Consequently, we get
that $M_\ell$ decreases essentially exponentially with $\ell$, until
it becomes zero, at which point connectivity is attained.  Formally,
we use the following two key lemmas.

\begin{lemma}[\textbf{Domination Lemma}] \label{lem:dom} With high
  probability, $\mathcal{S}_{\NumOfLayers/2}$ dominates
  $\mathcal{V}$.
\end{lemma}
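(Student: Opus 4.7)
The plan is to fix an arbitrary virtual node $u\in\mathcal{V}$, identify many of its potential dominators that lie in $\mathcal{V}_{\NumOfLayers/2}$, show that with overwhelming probability at least one of them is sampled, and then union-bound over all $|\mathcal{V}|=n\NumOfLayers$ virtual nodes.

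First I would count virtual neighbors of $u$ inside the first $\NumOfLayers/2$ layers. Say $u$ is a copy of the real node $v\in V$. Since $G$ is $k$-vertex-connected, $v$ has degree at least $k$ in $G$, and by the construction of $\mathcal{G}$ every virtual copy of a real neighbor of $v$ is a virtual neighbor of $u$. Each real neighbor of $v$ contributes exactly $\NumOfLayers/2$ distinct virtual copies in $\mathcal{V}_{\NumOfLayers/2}$ (one per layer), and these copies are distinct across different neighbors, so $u$ has at least $k\NumOfLayers/2$ neighbors inside $\mathcal{V}_{\NumOfLayers/2}$.

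Next, recall that each virtual node is sampled independently with probability $q\geq p/(2\NumOfLayers)$. The probability that none of the at least $k\NumOfLayers/2$ candidates lands in $\mathcal{S}_{\NumOfLayers/2}$ (the only way $u$ can fail to be dominated) is at most
\[
(1-q)^{k\NumOfLayers/2}\;\leq\;e^{-qk\NumOfLayers/2}\;\leq\;e^{-pk/4}\;<\;e^{-\alpha\sqrt{k}\log n/4},
\]
where the last step uses $p>\alpha\log n/\sqrt{k}$. Because the hypothesis is only meaningful when $p\leq 1$, we may assume $\sqrt{k}\geq\alpha\log n$, so the exponent is at least $\alpha^2\log^2 n/4$, and a union bound over all $n\NumOfLayers=n\lambda\log n$ virtual nodes shows that $\mathcal{S}_{\NumOfLayers/2}$ dominates $\mathcal{V}$ with probability $1-n^{-\omega(1)}$, certainly w.h.p.

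I do not expect serious obstacles here: the lemma is a direct counting-plus-tail-bound computation, deliberately positioned as a warm-up before the harder component-merging argument that controls $M_\ell$ in the remainder of the proof. The only subtleties are (i) verifying that the $k\NumOfLayers/2$ candidate dominators are genuinely distinct, which is immediate since $v$'s real neighbors are distinct and each contributes one virtual copy per layer, and (ii) tracking the parameter bookkeeping so that $q\geq p/(2\NumOfLayers)$, the choice $\NumOfLayers=\lambda\log n$, and the hypothesis on $p$ combine to produce an exponent that comfortably survives the $|\mathcal{V}|$-sized union bound.
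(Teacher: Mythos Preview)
Your proposal is correct and follows essentially the same approach as the paper: count at least $k\NumOfLayers/2$ virtual neighbors of an arbitrary virtual node inside $\mathcal{V}_{\NumOfLayers/2}$, bound the probability that none is sampled, and union-bound over all virtual nodes. The only cosmetic difference is that the paper phrases the tail bound as a Chernoff argument on the expected number $\Theta(kp)=\Omega(\sqrt{k}\log n)$ of sampled neighbors, whereas you compute the zero-survivor probability $(1-q)^{k\NumOfLayers/2}$ directly; the two are equivalent here, and your version is arguably cleaner.
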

The proof of the Domination Lemma is straightforward and thus deferred
to \Cref{app7}. In the following, let $\mathcal{D}$ be the set of
dominating sets of $\mathcal{G}$ consisting only of nodes from layers
$1,\dots,\NumOfLayers/2$. Further, for all $\ell\geq \NumOfLayers/2$,
let $\mathcal{D}_{\ell}$ contain all sets $\mathcal{T}\subseteq
\mathcal{V}_{\ell}$ such that there exists a $D\in \mathcal{D}$ for
which $D\subseteq \mathcal{T}$. That is, $\mathcal{D}_{\ell}$ is the collection of all sets of virtual nodes in layers $1,\dots,\ell$ that contain a dominating set $D \in \mathcal{D}$. Note that using this notation, Lemma
\ref{lem:dom} states that $S_{\NumOfLayers/2}\in \mathcal{D}$, w.h.p.

\begin{lemma} [\textbf{Fast Merger Lemma}]\label{lem:FML} If
  $\mathcal{S}_{\NumOfLayers/2}\in \mathcal{D}$, then for each
  ${\ell} \geq \frac{\NumOfLayers}{2}$, we have $M_{\ell+1} \leq
  M_\ell$. Moreover, for every ${\ell} \geq \frac{\NumOfLayers}{2}$
  and every $\mathcal{T}\in \mathcal{D}_\ell$, we have $\Pr\big[M_{{\ell}+1}
  \leq \frac{9}{10} \cdot M_{\ell} \,|\,
  \mathcal{S}_\ell=\mathcal{T}\big]\geq 1/3$.
\end{lemma}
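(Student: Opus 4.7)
The plan is to establish the two claims separately: the monotonicity bound $M_{\ell+1}\leq M_\ell$ follows directly from domination, while the probability bound is obtained in two steps---first an expectation bound $\mathbb{E}[M_\ell - M_{\ell+1}\mid \mathcal{S}_\ell=\mathcal{T}]\geq (2/5)\,M_\ell$, and then a reverse-Markov inequality to convert this into the stated probability.

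For the monotonicity, I would observe that since $\mathcal{S}_{L/2}\subseteq\mathcal{S}_\ell$ and $\Psi(\mathcal{S}_{L/2})$ dominates $V$, every real node $v$ either lies in $\Psi(\mathcal{S}_\ell)$ or has a $G$-neighbor in it. Hence any newly sampled virtual copy of $v$ in layer $\ell+1$ is adjacent in $\mathcal{G}$ either to a lower-layer copy of $v$ itself or to a copy of such a neighbor---in either case, to some virtual node already present in $\mathcal{S}_\ell$. So it joins an existing component instead of spawning a new one, and therefore $N_{\ell+1}\leq N_\ell$ deterministically.

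For the expectation bound, I would condition on $\mathcal{S}_\ell=\mathcal{T}\in\mathcal{D}_\ell$; the components $C_1,\ldots,C_{N_\ell}$ of $\mathcal{G}[\mathcal{T}]$ and their projections $A_i:=\Psi(C_i)$ are then fixed. The $A_i$ are pairwise disjoint, each connected in $G$, and their union $A$ dominates $V$. Setting $B_i:=N_G[A_i]$ and $I(v):=\{i:v\in B_i\}$, a sampled layer-$(\ell+1)$ copy of $v$ attaches in $\mathcal{G}[\mathcal{S}_{\ell+1}]$ to exactly the components indexed by $I(v)$, and two sampled copies of $v,w$ are adjacent in $\mathcal{G}$ iff $(v,w)\in E(G)$. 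The key structural input is that, by $k$-vertex-connectivity of $G$, for every $i$ either $A_i$ alone dominates $V$ (so $B_i=V$) or else $\partial A_i=N_G(A_i)\setminus A_i$ is a vertex cut of $G$ separating $A_i$ from $V\setminus N_G[A_i]$ and hence $|\partial A_i|\geq k$. Summing yields $\sum_i|B_i|\geq N_\ell\cdot k + |A|$, which combined with the per-layer rate $q\geq\alpha/(2\lambda\sqrt{k})$ supplies enough ``merger volume'' to link a constant fraction of components. Converting this raw boundary bound into a lower bound on expected component reduction is the delicate step: I would identify, for a constant fraction of components, a short (length-$O(1)$) merger witness in $G$ whose full sampling activates a merger, and then sum the resulting merger probabilities via linearity after bounding double-counting using a spanning-forest / tree-edge charging scheme on the induced bipartite merger graph (with one side $\{C_1,\ldots,C_{N_\ell}\}$, other side the sampled nodes, and cross-edges $\{C_i,v\}$ whenever $v\in B_i$).

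With $Y:=M_\ell-M_{\ell+1}$ and the monotonicity giving $Y\in[0,M_\ell]$, the estimate
\[
\mathbb{E}[Y]\;\leq\;\Pr[Y\geq M_\ell/10]\cdot M_\ell + (1-\Pr[Y\geq M_\ell/10])\cdot (M_\ell/10)
\]
combined with $\mathbb{E}[Y]\geq (2/5)\,M_\ell$ forces $\Pr[Y\geq M_\ell/10]\geq 1/3$, i.e., $\Pr[M_{\ell+1}\leq (9/10)\,M_\ell\mid \mathcal{S}_\ell=\mathcal{T}]\geq 1/3$. The hardest part will be the middle step---going from the easily-established ``total boundary volume is large'' to an actual lower bound on expected merger count: a length-$t$ merger path has sampling probability $q^t=\Theta(k^{-t/2})$, so only short paths carry enough probability, and the $k$-vertex-connectivity hypothesis is where the existence of sufficiently many short, near-disjoint merger witnesses should come from.
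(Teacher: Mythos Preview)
Your monotonicity argument and the final reverse-Markov conversion are both correct and essentially match the paper. The genuine gap is precisely the step you flag as ``the hardest part'': going from raw connectivity to a lower bound on expected mergers.

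The paper does not attempt a global charging argument. Instead it proves a per-component lemma (Lemma~\ref{lem:paths}): for \emph{each} component $\mathcal{C}$ of $\mathcal{G}[\mathcal{S}_\ell]$, conditional on $\mathcal{S}_\ell=\mathcal{T}$, the component has at least one connector path with probability $\geq 1/2$. The argument is: apply Menger's theorem in $G$ between a node of $\Psi(\mathcal{C})$ and a node of $\Psi(\mathcal{S}_\ell\setminus\mathcal{C})$ to get $k$ internally vertex-disjoint paths; then use domination of $\Psi(\mathcal{S}_\ell)$ to shorten each such path so that it has at most two internal nodes (walk along the path until you find one internal node, or two consecutive internal nodes, that straddle $\Psi(\mathcal{C})$ and its complement). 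This yields $k$ internally vertex-disjoint potential connectors of length $\leq 2$; each becomes an actual connector with probability $\geq q^2\geq(\alpha/2\lambda)^2/k$, and by internal vertex-disjointness these $k$ events are independent, so for $\alpha$ large enough at least one succeeds with probability $\geq 1/2$. With $X$ the number of components having a connector, $\mathbb{E}[X]\geq N_\ell/2$, and a reverse Markov on $X$ (not on $Y$) gives $\Pr[X\geq N_\ell/4]\geq 1/3$, from which $N_{\ell+1}\leq\frac{7}{8}N_\ell$ and hence $M_{\ell+1}\leq\frac{9}{10}M_\ell$.

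Your boundary bound $|\partial A_i|\geq k$ does not substitute for this. A node $v\in\partial A_i$ may have all of its $A$-neighbors inside $A_i$, so sampling $v$ alone merges nothing; you then need a second hop, and nothing in your setup guarantees many \emph{disjoint} two-hop witnesses. Menger's theorem plus domination-based shortening is exactly the mechanism that simultaneously delivers (i) $k$ witnesses, (ii) length $\leq 2$, and (iii) internal vertex-disjointness---and (iii) is what makes the success probability computation a product over independent trials, obviating any charging scheme. Your proposed bipartite spanning-forest charging would have to reproduce all three properties, and as written it does not; in particular, the ``total boundary volume $\sum_i|B_i|$'' quantity does not control the number of short disjoint cross-component paths.
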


Thus, as long as $\mathcal{S}_{\NumOfLayers/2}$ is a dominating set,
the probability for reducing the number of components by a constant
factor is at least $1/3$, independently of the result of the sampling
in layers $1,\dots,\ell$.

\paragraph{Connector Paths:} To prove the Fast Merger Lemma, we define
\emph{connector paths}. Consider the transition from
$\mathcal{G}[\mathcal{S}_\ell]$ to
$\mathcal{G}[\mathcal{S}_{\ell+1}]$, suppose $N_\ell \geq 2$, and let
$\mathcal{C}$ be a connected component of
$\mathcal{G}[\mathcal{S}_{\ell}]$. Consider the projection
$\Psi(\mathcal{S}_{\ell})$ onto
$G$. 
We say that a path $P$ of $G$ is a \emph{potential connector path} for
component $\Psi(\mathcal{C})$ if the following two conditions hold:
(A) $P$ has one endpoint in $\Psi(\mathcal{C})$ and another endpoint
in $\Psi(\mathcal{S}_{\ell} \setminus \mathcal{C})$ and (B) $P$ has at
most $2$ internal nodes. From $P$ we derive a potential connector path
$\mathcal{P}$ of $\mathcal{C}$ in $\mathcal{G}$ by taking the layer
$\ell+1$ copy in $\mathcal{G}$ of the internal nodes of $P$, and for
each endpoint $w$ of $P$ we take its copy in $\mathcal{S_{\ell}}$.  We
say that a potential connector path $\mathcal{P}$ is a \emph{connector
  path} if and only if in the layer $\ell+1$ copy of $G$, the internal
(virtual) nodes of $\mathcal{P}$ are all sampled. Note that if
$\mathcal{C}$ has a connector path to another component
$\mathcal{C}'$, then in $\mathcal{G}[\mathcal{S}_{\ell+1}]$, the
components $\mathcal{C}$ and $\mathcal{C}'$ are merged with each
other. Given this, in order to prove the Fast Merger Lemma, we show
\Cref{lem:paths} and prove it. Given \Cref{lem:paths}, we can prove
\Cref{lem:FML} and consequently \Cref{thm:threshold} using standard
probability arguments. The details of those parts are deferred to
\Cref{app7}.

\begin{lemma} \label{lem:paths} For each $\ell\in [{\NumOfLayers}/{2}, \NumOfLayers-1]$, and every $\mathcal{T}\in \mathcal{D}_\ell$, if $\mathcal{S}_\ell=\mathcal{T}$ and $N_\ell \geq 2$, then for each connected component
  $\mathcal{C}$ of $\mathcal{G}[\mathcal{S}_{\ell}]$, with probability at least $1/2$, $\mathcal{C}$ has at least one connector path.
\end{lemma}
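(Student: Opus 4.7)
\noindent\textit{Proof plan.} The plan is to produce, in the real graph $G$, a family of $k$ internally vertex-disjoint paths from $A:=\Psi(\mathcal{C})$ to $B:=\Psi(\mathcal{S}_\ell\setminus\mathcal{C})$, each of length at most $3$, and then lift them to potential connector paths in $\mathcal{G}$ whose connector-path events are mutually independent. A $(1-q^2)^k$-type estimate using $k q^2 = \Omega(1)$ (which follows from $p > \alpha \log n / \sqrt{k}$ and $\NumOfLayers = \lambda \log n$) will then give the claimed probability lower bound of $1/2$.

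I would first pin down the geometry of $A$ and $B$. Both are nonempty since $N_\ell \ge 2$, and $A \cap B = \emptyset$ because all layer-copies of any real node form a clique in $\mathcal{G}$ and hence cannot split between two components of $\mathcal{G}[\mathcal{S}_\ell]$. The same clique structure forbids any edge of $G$ between $A$ and $B$: an edge $uv$ with $u \in A$, $v \in B$ would give adjacent layer-copies linking $\mathcal{C}$ to $\mathcal{S}_\ell \setminus \mathcal{C}$ in $\mathcal{G}$. Finally, $A \cup B = \Psi(\mathcal{S}_\ell)$ is dominating in $G$, because $\mathcal{T} \in \mathcal{D}_\ell$ contains a dominating set of $\mathcal{G}$ and projections of dominating sets of $\mathcal{G}$ remain dominating in $G$.

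Next I would apply Menger's theorem to the pair $(A,B)$. By $k$-vertex-connectivity, any $S \subseteq V \setminus (A \cup B)$ with $|S| < k$ leaves $G - S$ connected, so the minimum $A$-$B$ vertex cut with cut-vertices restricted to $V \setminus (A \cup B)$ has size at least $k$; Menger then supplies $k$ internally vertex-disjoint paths $P_1, \dots, P_k$ from $A$ to $B$ with internal vertices in $V \setminus (A \cup B)$. I then shorten each $P_i$ to length at most $3$, crucially ensuring that its new internal vertex set sits inside the old one: if $P_i = (v_0, \dots, v_m)$ has $m \ge 4$, then $v_2 \in V \setminus (A \cup B)$ has some neighbor $w \in A \cup B$ by domination; if $w \in A$ I restart the path at $w$ to get $(w, v_2, \dots, v_m)$, and if $w \in B$ I truncate it to $(v_0, v_1, v_2, w)$ of length exactly $3$. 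Iterating produces $P_i'$ of length $\le 3$ whose internal vertex set lies in that of $P_i$; hence the $P_i'$ remain pairwise internally vertex-disjoint, and each has between $1$ and $2$ internal vertices (at least $1$ since there is no edge between $A$ and $B$).

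Lifting each $P_i'$ to its potential connector path $\mathcal{P}_i$ replaces internal real vertices by their unique layer-$(\ell+1)$ copies. Pairwise disjointness of the real internal vertex sets then makes the events ``all internal copies of $\mathcal{P}_i$ are sampled'' mutually independent across $i$, each with probability at least $q^2$, where $q = 1 - (1-p)^{1/\NumOfLayers} \ge p/(2\NumOfLayers) \ge \alpha/(2\lambda \sqrt{k})$. Thus
\[
\Pr[\mathcal{C}\text{ has no connector path}] \;\le\; (1-q^2)^k \;\le\; e^{-k q^2} \;\le\; 1/2,
\]
provided $\alpha$ is chosen sufficiently large compared to $\lambda$. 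The main obstacle I anticipate is exactly the shortcutting step: Menger delivers the $k$ disjoint paths but no bound on their length, so one must use the domination of $A \cup B$ to transform each Menger path into a length-$\le 3$ path while keeping its internal vertices inside the original set, which is precisely what guarantees the $k$ lifted sampling events are independent.
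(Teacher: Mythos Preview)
Your proposal is correct and follows essentially the same approach as the paper: apply Menger's theorem between $\Psi(\mathcal{C})$ and $\Psi(\mathcal{S}_\ell\setminus\mathcal{C})$, use domination of $\Psi(\mathcal{S}_\ell)$ to shorten each of the $k$ internally vertex-disjoint paths to at most two internal vertices while keeping the internal vertex sets nested, lift to layer $\ell{+}1$, and use independence of the sampling events to bound the failure probability by $(1-q^2)^k$. The only cosmetic differences are that you invoke the set version of Menger directly (the paper picks one node on each side) and you shorten iteratively via the vertex $v_2$, whereas the paper does a one-shot ``transition point'' argument along the path; both yield the same shortened paths with internal vertices contained in the originals.
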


\begin{proof}[Proof of \Cref{lem:paths}]
	Fix a layer $\ell\in [{\NumOfLayers}/{2}, \NumOfLayers-1]$, and an arbitrary set $\mathcal{T}\in \mathcal{D}_\ell$ and fix $\mathcal{S}_\ell=\mathcal{T}$.
  Consider the projection $\Psi(\mathcal{S}_{\ell})$ onto $G$ and
  recall Menger's theorem: Between any pair $(u,v)$ of non-adjacent
  nodes of a $k$-vertex connected graph, there are $k$ internally
  vertex-disjoint paths connecting $u$ and $v$. Applying Menger's
  theorem to a node in $\Psi(\mathcal{C})$ and a node in
  $\Psi(\mathcal{S}_\ell\setminus\mathcal{C})$, we obtain at least $k$
  internally vertex-disjoint paths between $\Psi(\mathcal{C})$ and
  $\Psi(\mathcal{S}_\ell \setminus \mathcal{C})$ in $G$. We first show
  that these paths can be shortened so that each of them has at most
  $2$ internal nodes i.e., to get property (B) of potential connector
  paths. Pick an arbitrary one of these $k$ paths and denote it $P$ =
  $v_1$, $v_2$, $...$, $v_r$, where $v_1 \in \Psi(\mathcal{C})$ and
  $v_r \in \Psi(\mathcal{S}_\ell \setminus \mathcal{C})$. By the
  assumption that $\mathcal{S}_{\NumOfLayers/2}$ dominates
  $\mathcal{G}$, we get that $\Psi(\mathcal{S}_\ell)$ dominates
  $G$. Hence, since $v_1 \in \Psi(\mathcal{C})$ and $v_r \in
  \Psi(\mathcal{S}_\ell \setminus \mathcal{C})$, either there is a
  node $v_i$ along $P$ that is connected to both $\Psi(\mathcal{C})$
  and $\Psi(\mathcal{S}_\ell \setminus \mathcal{C})$, or there must
  exist two consecutive nodes $v_i$, $v_{i+1}$ along $P$, such that
  one of them is connected to $\Psi(\mathcal{C})$ and the other is
  connected to $\Psi(\mathcal{S}_\ell \setminus \mathcal{C})$. In
  either case, we can derive a new path $P'$ which satisfies (B) and
  is internally vertex-disjoint from the other $k-1$ paths since its
  internal nodes are a subset of the internal nodes of $P$ and are not
  in $\Psi(\mathcal{S}_{\ell})$. After shortening all the $k$
  internally vertex-disjoint paths, we get $k$ internally
  vertex-disjoint paths satisfying (A) and (B), i.e., $k$ internally
  vertex-disjoint \emph{potential connectors paths} for
  $\Psi(\mathcal{C})$.

  For each potential connector path $P$ of these $k$ vertex-disjoint
  potential connector paths, each virtual internal node of its copy
  $\mathcal{P}$ in layer $\ell+1$ of $G$ is sampled with probability
  $q$. Since each potential connector path $P$ has at most $2$
  internal nodes, the probability that all internal virtual nodes of
  its copy $\mathcal{P}$ in layer $\ell+1$ are sampled is at least
  $q^2=(\frac{p}{2\NumOfLayers})^2 \geq (\frac{\alpha}{2\lambda})^2
  \cdot \frac{1}{k}$. Hence, the expected number of connector paths is
  at least $(\frac{\alpha}{\lambda})^2$. Because the $k$ potential
  connector paths are internally vertex-disjoint, the events of their
  copies being connector paths (having their internal virtual nodes in
  layer $\ell+1$ sampled) are independent. Therefore, choosing a large
  enough constant $\alpha$, we get that with probability at least
  $1/2$, component $\mathcal{C}$ has at least one connector path.
\end{proof}


\section{Construction of CDS Packing, and Vertex Connectivity After Sampling}\label{sec:general}

In this section, we present our main CDS packing result. 

\begin{theoremR}{thm:packing} Any graph with vertex connectivity
  $k\geq 1$ has a CDS packing of size $\Omega(\frac{k}{\log n}+1)$.
\end{theoremR}

We complement the above lower bound by an upper bound on CDS packing,
showing that \Cref{thm:packing} is existentially optimal and loosing
the $\Theta(\log n)$ factor is unavoidable. The proof of the following
\Cref{thm:BadExample}, which is deferred to \Cref{sec:BadExample},
builds on a base graph $\mathsf{H}$, which has vertex connectivity
$O(\log n)$ and maximum CDS packing size $O(1)$. It then uses the
probabilistic method\cite{ProbMethod92} to show that for any value of
$k$, there exists a subgraph of $\mathsf{H}$ which shows the claimed
logarithmic gap.

\begin{theoremRF} {thm:BadExample} For any sufficiently large $n$ and
  any $k \in [1, n/4]$, there exists a $k$-vertex-connected, $n$-node
  graph $G$ such that the maximum CDS packing of $G$ is of size
  $O\big(\frac{k}{\log n}+1\big)$.
\end{theoremRF}

Our proof of \Cref{thm:packing} is constructive and it also shows that
a CDS packing of size $O(k/\log n)$ can be found efficiently. Even
more generally, the same construction allows to create a CDS packing
of $G$ of size $\Omega({k p^2}/{\log^3 n})$, if nodes are
independently sampled with probability $p$ and the CDSs of the packing
are restricted to only consist of sampled nodes. Using
\Cref{prop:upp}, this CDS packing acts as a witness to the vertex
connectivity of the sampled graph and it thus proves our main sampling
theorem.
\begin{theoremR}{thm:general}
  Consider an arbitrary graph $G=(V,E)$ with vertex connectivity
  $k$. Let $S$ be a subset of $V$ where each node $v\in V$ is included
  in $S$ (i.e., \emph{sampled}) independently with probability
  $p$. W.h.p., the graph $G[S]$ has a CDS packing of size
  $\Omega(\frac{k p^2}{\log^3 n})$, and thus, also vertex connectivity
  $\Omega(\frac{k p^2}{\log^3 n})$.
\end{theoremR}

To capture Theorems \ref{thm:packing} and \ref{thm:general} together,
in the remainder of this section, we discuss the construction in the
case of vertex sampling with probability $p$ and give a CDS packing
with size $\Omega(kq^2/ \log n)$, where
$q=1-(1-p)^{1/(3\NumOfLayers)}$ and $L=\lambda\log n=\Theta(\log n)$
is the number of layers (as in \Cref{sec:threshold}). Since $q \geq
\frac{p}{6\NumOfLayers}=\Theta(\frac{p}{\log n})$, the CDS packing
size is $\Omega(kq^2/\log n) = \Omega({k p^2}/{\log^3 n})$, thus
proving \Cref{thm:general}. When we set $p=1$ (i.e., no sampling), we
get $q=1$ and thus, the CDS packing size becomes $\Omega({k}/{\log
  n})$ as claimed by \Cref{thm:packing}.

\subsection{Construction of the CDS Packing}\label{subsec:partVirtual}
To construct the claimed CDS packing, similarly to
\Cref{sec:threshold}, we transform the graph $G=(V,E)$ into a virtual
graph $\mathcal{G}=(\mathcal{V}, \mathcal{E})$, this time consisting
of $3\NumOfLayers$ copies of $G$. As in \Cref{sec:threshold}, to
translate the sampling to $\mathcal{G}$, we sample each virtual node
with probability $q = 1-(1-p)^{1/(3\NumOfLayers)} \geq
\frac{p}{6\NumOfLayers}$. To construct the promised CDS packing on the
sampled real nodes, we create a CDS \emph{partition} of size
$\Omega(kq^2)=\Omega({k p^2}/{\log^2 n})$ on the sampled virtual
nodes. Since each real node has $\Theta(\log n)$ virtual copies, by
giving a weight of $1/\Theta(\log n)$ to each CDS, we directly get the
claimed CDS packing.

In the rest of this section, we work on $\mathcal{G}$ and show how to
construct $t = \delta\cdot\, kq^2$ vertex-disjoint connected
dominating sets on the sampled virtual nodes, for a sufficiently small
constant $\delta>0$. We have $t$ \emph{classes} and we assign each
sampled virtual node to one class, such that, eventually each class is
a CDS, w.h.p. To organize the construction, we group the virtual nodes
in $\NumOfLayers$ layers, putting three copies of graph $G$ in each
layer. Inside each layer, the three copies are distinguished by a
\emph{type number} in $\set{1,2, 3}$.

We use notations (partially) similar to \Cref{sec:threshold}. Let
$\mathcal{S}^i_\ell$ be the set of sampled nodes of layers $1$ to
$\ell$ that are assigned to class $i$. Let $N^i_\ell$ be the number of
connected components of $\mathcal{G}[\mathcal{S}^i_\ell]$. Finally,
define $M_\ell:=\sum_{i=1}^t (N_\ell^i-1)$ to be the total number of
excess components after considering layers $1,\dots,\ell$. Initially
$M_1 \leq n-t$, and as soon as $M_\ell=0$, each class induces a
connected sub-graph. Thus, the goal will be to assign sampled virtual
nodes to classes such that $M_\NumOfLayers=0$.

The class assignments are performed in a recursive manner based on the layer numbers. We begin the assignment with a jump-start, assigning sampled virtual nodes of layers $1$ to $\frac{\NumOfLayers}{2}$ to random classes. We show in ~\Cref{lem:layer1dom} that this already gives domination. The proof is deferred to \Cref{app8}.
\begin{lemma}[\textbf{Domination Lemma}] \label{lem:layer1dom} 
  W.h.p., for each class $i$, $\mathcal{S}^i_{\NumOfLayers/2}$ dominates $\mathcal{V}$.
\end{lemma}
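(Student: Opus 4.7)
The plan is a Chernoff plus union bound argument, carried out separately for every (virtual node, class) pair. Fix $v\in\mathcal{V}$ and $i\in\{1,\dots,t\}$, and let $X_{v,i}$ count the virtual nodes in layers $1,\dots,\NumOfLayers/2$ that lie in the closed $\mathcal{G}$-neighborhood of $v$ and are assigned to class $i$. I will show that $\E[X_{v,i}]=\Omega(\log n)$ with an arbitrarily large hidden constant, so that $\Pr[X_{v,i}=0]\le n^{-c}$ for any desired $c$ by a standard multiplicative Chernoff bound. Taking a union bound over the $|\mathcal{V}|\cdot t\le O(n^2\log n)$ pairs $(v,i)$ then yields the lemma.

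To estimate $\E[X_{v,i}]$, let $u\in V$ denote the real node of which $v$ is a copy. Since $G$ has vertex connectivity $k$, $u$ has degree at least $k$ in $G$. Each of the $\tfrac{3\NumOfLayers}{2}$ layer--type combinations in layers $1,\dots,\NumOfLayers/2$ contributes at least $k+1$ virtual nodes that are either copies of $u$ or copies of a $G$-neighbor of $u$, and all of these lie in the closed $\mathcal{G}$-neighborhood of $v$ by construction of $\mathcal{G}$. This gives at least $\tfrac{3\NumOfLayers(k+1)}{2}-1\ge \NumOfLayers k$ candidate dominators of $v$ in the first $L/2$ layers. Each candidate is independently sampled with probability $q$, and each sampled node in these layers is then independently assigned to a uniformly random class, so each candidate contributes to $X_{v,i}$ an independent Bernoulli of mean $q/t$.

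Plugging in $t=\delta k q^2$ and using $q\le 1$,
\[
  \E[X_{v,i}] \;\ge\; \NumOfLayers k\cdot\frac{q}{t} \;=\; \frac{\NumOfLayers}{\delta q} \;\ge\; \frac{\lambda\log n}{\delta},
\]
which is $\Omega(\log n)$ with an arbitrarily large hidden constant by choosing $\delta$ small enough relative to $\lambda$. A multiplicative Chernoff bound applied to the independent Bernoullis above then gives $\Pr[X_{v,i}=0]\le e^{-\Omega(\log n)}\le n^{-c}$ for any fixed $c$, and the union bound completes the argument.

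I do not anticipate a serious obstacle here: the proof is essentially mechanical counting. The only point that needs a moment's care is the independence invoked in Chernoff, namely that the events ``candidate $w$ is sampled and assigned to class $i$'' are jointly independent across distinct candidates $w$; this holds because both the sampling and the class assignment act independently on distinct virtual nodes. The only conceptual content is that the factor $q^2$ in the definition $t=\delta k q^2$ exactly absorbs the $q$-loss from sampling and the $1/q$-type loss from the random class assignment, leaving the clean bound $\E[X_{v,i}]\ge \NumOfLayers/(\delta q)$ which is $\Omega(\log n)$ uniformly in $q$.
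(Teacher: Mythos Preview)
Your proposal is correct and follows essentially the same approach as the paper: count the expected number of sampled class-$i$ neighbors of a fixed virtual node among the $\Theta(Lk)$ candidates in layers $1,\dots,L/2$, observe this expectation is $\Theta(L/(\delta q))=\Omega(\log n)$ since $t=\delta kq^2$, and finish with Chernoff plus a union bound over all (node, class) pairs. Your write-up is in fact more careful than the paper's sketch (which appears to drop a factor of $L$ in the displayed expectation), but the argument is the same.
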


Note that the domination of each class follows directly from this lemma. For the rest of this section, we assume that for each class $i$, $\mathcal{S}^i_{\NumOfLayers/2}$ dominates $\mathcal{V}$, and we will use this property to get short connector paths. 

After the first $\frac{\NumOfLayers}{2}$ layers, we go over the layers one by one and for each layer $\ell \in [{\NumOfLayers}/{2}, \NumOfLayers-1]$, we assign nodes of layer $\ell+1$ to classes based on the assignments of nodes of layers $1$ to $\ell$. In the rest of this section, we explain this assignment for layer $\ell+1$. We refer to nodes of layers $1$ to $\ell$ as \emph{old nodes} whereas nodes of layer $\ell+1$ are called \emph{new nodes}. The goal is to perform the class assignment of the new sampled nodes such that (in expectation) the number of connected components decreases by a constant factor in each layer, i.e., to get a Fast Merger Lemma. During the recursive assignments, our main tool will be a modified variant of the concept of \emph{connector paths}, which we define next.

\subsection{Connector Paths}\label{subsec:Connectors}
Consider a class $i$, suppose $N^i_{\ell} \geq 2$, and consider a
component $\mathcal{C}$ of $\mathcal{G}[\mathcal{S}^i_\ell]$. We use
the projection $\Psi(\mathcal{S}^i_\ell)$ onto graph $G$ as defined in
\Cref{sec:threshold}. A path $P$ in $G$ is called a \emph{potential
  connector} for $\Psi(\mathcal{C})$ if it satisfies the following
three conditions: (A) $P$ has one endpoint in $\Psi(\mathcal{C})$ and
the other endpoint in $\Psi(\mathcal{S}^i_{\ell} \setminus
\mathcal{C})$, (B) $P$ has at most two internal nodes, (C) if $P$ has
exactly two internal nodes and has the form $s$, $u$, $w$, $t$ where
$s\in \Psi(\mathcal{C})$ and $t \in \Psi(\mathcal{S}^i_{\ell}
\setminus \mathcal{C})$, then $w$ does not have a neighbor in
$\Psi(\mathcal{C})$ and $u$ does not have a neighbor in
$\Psi(\mathcal{S}^i_{\ell} \setminus \mathcal{C})$. Condition (C) is
an important condition, which is new compared to the definition in
\Cref{sec:threshold}.\footnote{We could add condition (C) to the
  definition of potential connector paths in \Cref{sec:threshold} as
  well, but it is not necessary there.} The condition requires that
each potential connector path is minimal, i.e., there is no potential
connector of length $2$ connecting $\Psi(\mathcal{C})$ to another
component of $\Psi(\mathcal{S}^i_{\ell})$ via only $u$ or only $w$.

From a potential connector path $P$ on graph $G$, we derive a
potential connector path $\mathcal{P}$ on virtual nodes in
$\mathcal{G}$ by determining the types of related internal nodes as
follows: (D) If $P$ has one internal real node $w$, then for
$\mathcal{P}$ we choose the virtual node of $w$ in layer ${\ell}+1$ in
$\mathcal{G}$ with type $1$. (E) If $P$ has two internal real nodes
$w_1$ and $w_2$, where $w_1$ is adjacent to $\Psi(\mathcal {C})$ and
$w_2$ is adjacent to $\Psi(\mathcal{S}^i_{\ell} \setminus \mathcal
{C})$, then for $\mathcal{P}$ we choose the virtual nodes of $w_1$ and
$w_2$ in layer $\ell+1$ with types $2$ and $3$, respectively. Finally,
for each endpoint $w$ of $P$ we add  the copy of $w$ in
$\mathcal{S}^i_{\ell}$ to $\mathcal{P}$. 

A given potential connector path $\mathcal{P}$ on the virtual nodes of
layer $\ell+1$ is called a \emph{connector path} if and only if the
internal nodes of $\mathcal{P}$ are sampled. We call a connector path
that has one internal node a \emph{short connector path}, whereas a
connector path with two internal nodes is called a \emph{long
  connector path}.

Because of condition (C), and rules (D) and (E) above, we get the
following important fact:
\begin{proposition}\label{prop:longpaths}
  For each class $i$, each type-$2$ virtual node $u$ of layer $\ell+1$
  is on connector paths of at most one connected component of
  $\mathcal{G}[\mathcal{S}^i_{\ell}]$.
\end{proposition}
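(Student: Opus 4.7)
The plan is to argue by contradiction, exploiting the extra minimality condition (C) that was specifically added to the definition of potential connector paths. First, I would observe that a type-$2$ virtual node can only play one role in a connector path: by rule (D), type-$1$ virtual nodes arise only as the single internal node of a short connector path, and by rule (E), type-$3$ virtual nodes arise only as the second internal node $w_2$ of a long connector path. Hence if a type-$2$ virtual node $u$ sits on a connector path of some component $\mathcal{C}$, that path must be a long connector path and $u$ must be the internal node $w_1$ adjacent to $\Psi(\mathcal{C})$.

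Next, suppose for contradiction that $u$ (the layer-$\ell{+}1$ type-$2$ copy of a real node $u^{*}$) lies on a connector path of two distinct components $\mathcal{C}_1 \ne \mathcal{C}_2$ of $\mathcal{G}[\mathcal{S}^i_{\ell}]$. From $u$'s role on $\mathcal{C}_1$'s path, rule (E) says $u^{*}$ has a neighbor in $\Psi(\mathcal{C}_1)$, and condition (C) applied to that long path says $u^{*}$ has \emph{no} neighbor in $\Psi(\mathcal{S}^i_{\ell}\setminus \mathcal{C}_1)$. From $u$'s role on $\mathcal{C}_2$'s path, $u^{*}$ has a neighbor in $\Psi(\mathcal{C}_2)$.

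To close the contradiction, I would show $\Psi(\mathcal{C}_2) \subseteq \Psi(\mathcal{S}^i_{\ell}\setminus \mathcal{C}_1)$. This is immediate from $\mathcal{C}_2 \subseteq \mathcal{S}^i_{\ell}\setminus \mathcal{C}_1$, which holds because distinct connected components are disjoint and both lie in $\mathcal{S}^i_{\ell}$; the projection operator $\Psi$ is clearly monotone. (As a sanity check, the projections $\Psi(\mathcal{C}_1)$ and $\Psi(\mathcal{C}_2)$ themselves are disjoint, since any real node with virtual copies in both would have those copies joined in $\mathcal{G}$ by the edges between copies of the same real node, forcing them into one component.) Combining the two facts, $u^{*}$ has a neighbor in $\Psi(\mathcal{S}^i_{\ell}\setminus \mathcal{C}_1)$, contradicting condition (C).

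I do not expect any real obstacle: the proof is essentially just an unpacking of definitions, and the entire content of the proposition is already encoded in condition (C) of the definition of potential connector, which was introduced precisely to enforce this minimality. The only subtlety worth spelling out carefully is why a type-$2$ virtual node cannot also appear in the role of $w_2$ or as the single internal node of a short connector on another component; this is ruled out by the type-assignment rules (D) and (E), which deterministically dictate the type from the structural role.
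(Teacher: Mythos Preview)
Your proposal is correct and is exactly the argument the paper has in mind: the paper does not spell out a proof at all but simply states that the proposition follows ``because of condition (C), and rules (D) and (E) above,'' and your write-up is a faithful and careful unpacking of precisely that one-line justification.
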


\begin{figure}[t]
\centering
\includegraphics[width=0.7\textwidth]{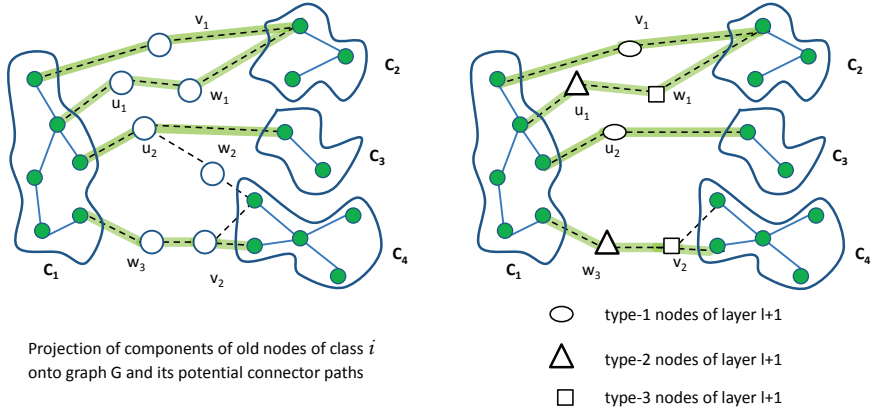}
	\caption{Potential Connector Paths for component $\mathcal{C}_1$ in layer $\ell+1$ copies of $G$}
	\label{fig:ConnectorPaths}
	\vspace{-0.2cm}
\end{figure}

\Cref{fig:ConnectorPaths} demonstrates an example of potential
connector paths for a component $\mathcal{C}_1 \in
\mathcal{G}[\mathcal{S}^i_\ell]$. The figure on the left shows graph
$G$, where the projection $\Psi(\mathcal{S}^i_\ell)$ is indicated via
green nodes, and the green paths are potential connector paths of
$\Psi(\mathcal{C}_1)$. On the right side, we see the same potential
connector paths, where the type of the related internal nodes are
determined according to rules (D) and (E) above, and nodes of
different types are distinguished via different shapes (for clarity,
virtual nodes of other types are omitted from the figure).
 

The following lemma shows that each component that is not alone in its
class is likely to have many connector paths. The proof is similar to
that of \Cref{lem:paths}, and is deferred to \Cref{app8}.

\begin{lemma} [\textbf{Connector Abundance Lemma}] \label{lem:CAL}
  Consider a layer $\ell \geq \NumOfLayers/2$ and a class $i$ such
  that $\mathcal{S}_{L/2}^i\subseteq \mathcal{S}_\ell^i$ is a
  dominating set of $\mathcal{G}$ and $N^i_{\ell}\geq 2$. Further
  consider an arbitrary connected component $\mathcal{C}$ of
  $\mathcal{G}[\mathcal{S}^i_\ell]$. Then, with probability at least
  $1/2$, $\mathcal{C}$ has at least $t$ internally vertex-disjoint
  connector paths.
\end{lemma}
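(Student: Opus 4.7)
My plan is to mirror the structure of \Cref{lem:paths}, producing $k$ internally vertex-disjoint potential connector paths for $\mathcal{C}$ and then applying concentration to show that at least $t=\delta k q^2$ of them survive sampling as genuine connector paths.

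First, I would apply Menger's theorem in $G$ to a node of $\Psi(\mathcal{C})$ and a node of $\Psi(\mathcal{S}_\ell^i\setminus\mathcal{C})$ (handling the adjacent case as in \Cref{lem:paths}) to extract $k$ internally vertex-disjoint paths between $\Psi(\mathcal{C})$ and $\Psi(\mathcal{S}_\ell^i\setminus\mathcal{C})$. The assumption $\mathcal{S}_{L/2}^i\subseteq \mathcal{S}_\ell^i$ implies that $\Psi(\mathcal{S}_\ell^i)$ dominates $G$, so the same shortening argument as in \Cref{lem:paths} reduces each path to one with at most two internal nodes (conditions (A) and (B)), while preserving internal vertex-disjointness since the new internal nodes form a subset of the old ones. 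I would then enforce condition (C): whenever a shortened length-$3$ path $s,u,w,t$ admits a shortcut witnessing that (C) fails (e.g., $w$ is adjacent to some $s'\in\Psi(\mathcal{C})$, or $u$ to some $t'\in\Psi(\mathcal{S}_\ell^i\setminus\mathcal{C})$), I replace it with the shorter potential connector $s',w,t$ or $s,u,t'$, which still uses only a subset of the original internal nodes and therefore preserves disjointness across the $k$ paths.

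Second, I would lift each of the $k$ resulting potential connector paths to $\mathcal{G}$ via rules (D) and (E). Since the paths have pairwise disjoint internal real nodes and the types are determined deterministically by the role each internal node plays on its own path, the internal virtual nodes of layer $\ell+1$ across the $k$ lifted paths are pairwise disjoint. Each lifted path has at most two such internal virtual nodes, each sampled independently with probability $q$, so each lifted path becomes a connector path with probability at least $q^2$, and these $k$ events are mutually independent.

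Finally, I would apply a Chernoff bound: the number $X$ of surviving connector paths stochastically dominates $\mathrm{Bin}(k,q^2)$, so $\E[X]\geq k q^2 = t/\delta$, and for a suitably small constant $\delta$ one gets $\Pr[X<t]\leq \exp(-c\,kq^2)\leq 1/2$ in the relevant parameter regime (the case $kq^2$ below a constant, where $t<1$, being trivial). The main obstacle I anticipate is the bookkeeping around simultaneous shortening and enforcement of condition (C) across all $k$ Menger paths, in order to guarantee that internal vertex-disjointness — and hence the independence of the sampling events — is genuinely preserved; once this is secured, the Chernoff step is standard.
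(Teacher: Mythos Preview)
Your proposal is correct and follows essentially the same route as the paper: Menger's theorem plus domination to get $k$ internally vertex-disjoint potential connector paths satisfying (A)--(C), lifting via (D)/(E) while preserving disjointness, and then concentration on the independent survival events. The only cosmetic difference is that the paper invokes ``a simple application of Markov's inequality'' at the last step (with $t=\delta k q^2$ for small enough $\delta$), whereas you use Chernoff; your version is if anything more explicit, and your handling of condition~(C) is more carefully spelled out than in the paper's one-line reference back to \Cref{lem:paths}.
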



\subsection{Recursive Step of Class Assignment} \label{subsec:recurse}
From \Cref{lem:CAL} we know that for each connected component of each
class $i$ with $N^i_\ell \geq 1$, with probability at least $1/2$,
this component has at least $t$ connector paths. For each such
component, pick exactly $t$ of its connector paths. Using Markov's
inequality, we get that with probability at least $1/4$, we have at
least $M_\ell \cdot t$ connector paths in total, over all the classes
and components. We use these connector paths to assign the class
numbers of nodes of layer $\ell+1$. This part is done in a greedy
fashion, in three stages as follows:

\begin{itemize}
\item[(I)] For each type-$1$ new node $v$: For each class $i$, define
  the \emph{class-$i$ degree of $v$} to be the number of connected
  components of class $i$ which have a \emph{short} connector path
  through $v$. Let $\Delta$ be the maximum class-$i$ degree of $v$ as
  $i$ ranges over all classes, and let $i^*$ be a class that attains
  this maximum. If $\Delta \geq 1$: Assign node $v$ to class
  $i^*$. Also, remove all connector paths of all classes that go
  through $v$. Moreover, remove all connector paths of all the
  connected components of class $i^*$ which have $v$ on their short
  connector paths.

\item[(II)] For each type-$3$ new node $u$: For each class $i$, define
  the \emph{class-$i$ degree of $u$} to be the number of connected
  components of class $i$ which have a \emph{long} connector path
  through $u$. Let $\Delta$ be the maximum class-$i$ degree of $u$ as
  $i$ ranges over all classes and let $i^*$ be a class that attains
  this maximum. If $\Delta \geq 1$: Assign $u$ to class
  $i^*$. Moreover, each of the $\Delta$ long connector paths of class
  $i^*$ that passes through $u$ also has a type-$2$ internal node.
  Let these type-$2$ nodes be $v_1,\dots,v_{\Delta}$. We assign the
  nodes $v_1,\dots,v_\Delta$ to class $i^*$. Then, all the connector
  paths that go through $u$ or any of the nodes $v_1, \dots,
  v_{\Delta}$ are removed. We also remove all connector paths of each
  component of class $i^*$ that has a connector path going through
  $u$.

\item[(III)] Assign each remaining new node to a random class.

\end{itemize}


\begin{lemma} [\textbf{Fast Merger Lemma}] \label{lem:FML2} For each
  ${\ell} \geq \frac{\NumOfLayers}{2}$ and every assignment of the
  sampled layer $1,\dots,\ell$ nodes to classes such that for all
  classes $i$, $\mathcal{S}_{\NumOfLayers/2}^i$ is a dominating set of
  $\mathcal{G}$, with probability at least $1/4$, $M_{{\ell}+1} \leq
  \frac{5}{6} \cdot M_{\ell}$.
\end{lemma}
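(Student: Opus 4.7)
The plan is to prove the lemma in three parts: (i) use a reverse-Markov argument on the Connector Abundance Lemma to obtain abundant connector paths with constant probability, (ii) show that the greedy stages (I) and (II) convert these paths into mergers, reducing $M_\ell$ by a constant factor, and (iii) verify that stage (III) cannot undo the progress. For part (i), I fix any class assignment of layers $1,\dots,\ell$ satisfying the dominating-set hypothesis. By Lemma~\ref{lem:CAL}, each excess component across all classes has $t$ internally vertex-disjoint connector paths with probability at least $1/2$. Let $X$ be the number of excess components for which we fix $t$ such paths; then $\E[X]\geq M_\ell/2$, and since $X\leq M_\ell$, reverse Markov gives $\Pr[X\geq M_\ell/4]\geq 1/3$. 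On this event we have a pool of at least $M_\ell t/4$ connector paths to work with.

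For part (ii), I analyze the greedy assignment one node at a time. When a type-$1$ node $v$ with current class degrees $d_v^1,\dots,d_v^t$ and maximum $\Delta_v = d_v^{i^*}$ is processed, assigning $v$ to class $i^*$ fuses the $\Delta_v$ distinct class-$i^*$ components whose short connector paths pass through $v$ (together with the other endpoints of those paths) into one component, so the reduction of $M$ at $v$ is at least $\Delta_v$. Stage (II) for a type-$3$ node $u$ is analogous: jointly assigning $u$ and the pairwise distinct (by Proposition~\ref{prop:longpaths}) type-$2$ partners $v_1,\dots,v_{\Delta_u}$ to the maximizing class $i^*$ reduces $M$ by at least $\Delta_u$. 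Summing, the total savings are $\sum_v \Delta_v + \sum_u \Delta_u$, which by the max-$\geq$-average inequality $\Delta \geq d/t$ is at least $1/t$ times the sum of the degrees of the processed nodes. A charging argument—each path killed at a processed node in a non-maximizing class is paid for by the simultaneous merger in the maximizing class, while paths killed collaterally through parent-component mergers are paid for by those very mergers—then shows the total savings are $\Omega(1/t)$ times the full pool, hence $\Omega(M_\ell)$. Pinning the reduction down to the required factor $5/6$ is the main obstacle; it requires tuning the constant $\delta$ in $t=\delta k q^2$ so that Lemma~\ref{lem:CAL}'s abundance (and the $1/4$ factor from part (i)) offsets the $1/t$ averaging loss and the at-most-$O(t)$ paths killed per unit of merger savings.

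For part (iii), stage (III) cannot increase $M$: if a leftover virtual node $v$ in layer $\ell+1$ is assigned to class $i$, then since $\Psi(\mathcal{S}_{\NumOfLayers/2}^i)$ dominates $G$ (by the lemma's hypothesis), either $\Psi(v)$ itself lies in $\Psi(\mathcal{S}_{\NumOfLayers/2}^i)$ or it has a real neighbor there; in both cases the corresponding virtual copy in $\mathcal{S}_{\NumOfLayers/2}^i \subseteq \mathcal{S}_\ell^i$ is adjacent to $v$ in $\mathcal{G}$, so $v$ joins a pre-existing class-$i$ component rather than forming a new singleton. Combining parts (i)--(iii), on the constant-probability event of part (i) we obtain $M_{\ell+1}\leq \tfrac{5}{6}M_\ell$, proving the lemma.
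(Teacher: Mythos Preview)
Your three-part structure matches the paper's proof exactly: obtain a large pool of connector paths via the Connector Abundance Lemma, run an accounting argument on the greedy stages, and use domination to ensure stage~(III) creates no new components. Parts~(i) and~(iii) are essentially correct and coincide with the paper. The substantive gap is in part~(ii), where you assert a charging argument but do not carry it out; this is precisely where the paper's proof does its work.

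Two concrete issues. First, ``assigning $v$ to class $i^*$ \dots\ reduces $M$ by at least $\Delta_v$'' is not quite right: if the $\Delta_v$ components' connector paths through $v$ pair them with one another, only $\Delta_v$ components fuse into one and $M$ drops by $\Delta_v-1$. The paper avoids this off-by-one by tracking a different quantity: it declares the $\Delta$ components \emph{good} (each merges with at least one other) and only at the very end converts ``$\geq M_\ell/3$ good components'' into ``$M$ drops by $\geq M_\ell/6$''. Second, and more importantly, your max-$\geq$-average step uses the degrees $d_v^i$ \emph{at the time $v$ is processed}, after earlier steps have already deleted paths; so $\sum_v\sum_i d_v^i$ is not the initial pool, and the inequality does not by itself relate $\sum_v\Delta_v$ to $M_\ell$. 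What is actually needed---and what you leave as a promissory ``charging argument''---is a per-step bound of the form \emph{each greedy step makes $\Delta$ components good while deleting at most $3\Delta t$ connector paths from the pool}. Proving this requires a short case analysis: in stage~I, the $\Delta$ merged components of class $i^*$ lose at most $\Delta t$ paths, and for every other class $i\neq i^*$ the node $v$ lies on at most $\Delta$ of its paths (by the maximality of $\Delta$ together with internal vertex-disjointness), giving at most $2\Delta t$ total. In stage~II the delicate point is bounding the collateral damage from assigning the type-$2$ partners $v_1,\dots,v_\Delta$: here Proposition~\ref{prop:longpaths} is essential, since it guarantees each $v_j$ lies on long connector paths of at most one component per class, so the extra loss is at most $\Delta$ per class $i\neq i^*$, yielding at most $3\Delta t$ total. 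With this bound in hand the budget $M_\ell t$ gives $\geq M_\ell/3$ good components and hence $M_{\ell+1}\leq \tfrac{5}{6}M_\ell$.

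Finally, note that your reverse-Markov step produces a pool of only $M_\ell t/4$ with probability $\geq 1/3$; plugging that into the $3\Delta t$ accounting would give $M_{\ell+1}\leq \tfrac{23}{24}M_\ell$ with probability $1/3$, not the stated $\tfrac{5}{6}$ with probability $1/4$. The paper instead argues (somewhat tersely) that the full budget $M_\ell t$ is available with probability $\geq 1/4$, which is what the constant $5/6$ requires.
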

The proof is based on an accounting argument that uses the total
number of remaining connector paths over all classes and components as
the budget, and shows that number of components that are merged, each
with at least one other component, is large. We defer the details to
\Cref{app8}.


\section{Construction of a CDS Partition}
\label{sec:partition}

\begin{theoremRF}{thm:partition}
  Every $k$-vertex-connecgted graph $G$ has a CDS partition of size
  $\Omega\big(\frac{k}{\log^5 n}\big)$, and if $k =\Omega(\sqrt{n})$,
  then $G$ has a CDS partition of size $\Omega\big(\frac{k}{\log^2
    n}\big)$. Moreover, such CDS partitions can be found in polynomial
  time.
\end{theoremRF}

In order to achieve this CDS partition, we use an algorithm in the
style of the CDS packing of \Cref{thm:general}. Here, we explain the
key changes: since in a CDS partition, each node can only join one
CDS, we cannot use the layering style of \Cref{thm:general}, which
uses $\Theta(\log n)$ copies of $G$ and where each node can join
$O(\log n)$ CDSs. Instead, we use \emph{random} layering: each
node chooses a random \emph{layer number} in $\set{1, \dots, L}$ and a
random \emph{type number} in $\set{1,2,3}$. The construction is again
recursive, with first assigning nodes of layer $1$ randomly to one of
$t$ random classes. This suffices to give domination. Here, $t =\delta
\frac{k}{\log^2 n}$ if $k =\Omega(\sqrt{n})$ and $t=\delta
\frac{k}{\log^5 n}$, otherwise. After that, for each $\ell \geq 2$, we
assign class numbers of nodes of layer $\ell+1$ based on the
configuration of classes in layers $1$ to $\ell$, using the same
greedy algorithm as in \Cref{subsec:recurse}. Next, we re-define the
connector paths, incorporating the random layering.

\paragraph{Connector Paths for CDS Partition:} Let $V^i_\ell$ be the
set of all nodes of layers $1$ to $\ell$ in class $i$. Consider a
component $\mathcal{C}$ of $G[V^i_\ell]$. Define potential connector
paths on $G$ as in \Cref{subsec:Connectors} (conditions (A) to
(C)). Then, for each potential connector path on $G$, this path is
called a \emph{connector path} if its internal nodes are in layer
$\ell+1$ and the types of its internal nodes satisfy rules (D) and (E)
in \Cref{subsec:Connectors}.

The main technical change, with respect to the CDS packing of
\Cref{sec:general}, appears in obtaining a Connector Abundance Lemma,
which we present in two versions, depending on the magnitude of
vertex-connectivity.

\begin{lemma} [\textbf{Connector Abundance
    Lemma}] \label{lem:CAL-Part} For each class $i$ and layer ${\ell}
  \leq {\NumOfLayers}/{2}$ such that $N^i_{\ell}\geq 2$, for each
  connected component $\mathcal{C}$ of $G[V^i_{\ell}]$, with
  probability at least $1/2$, $\mathcal{C}$ has at least
  $\Omega\big(\frac{k}{\log^5 n}\big)$ internally vertex-disjoint
  connector paths, with independence between different layers ${\ell}
  \leq {\NumOfLayers}/{2}$.
\end{lemma}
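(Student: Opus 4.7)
The plan is to mirror the proof of the Connector Abundance Lemma~\ref{lem:CAL} from the packing setting of \Cref{sec:general}, the principal change being that the ``layer'' of a node is now part of its own independent random label rather than a virtual-copy index. I would reveal the randomness in a convenient order that keeps the conditioning clean: first each node's (layer, type) label, drawn independently and uniformly; then the uniformly random class assigned to each layer-$1$ node; and finally the classes of nodes in layers $2,\dots,\ell$, which are the deterministic function of this randomness prescribed by the greedy algorithm of \Cref{subsec:recurse}. Conditioning on this entire history fixes $V^i_\ell$, the components of $G[V^i_\ell]$, and in particular $\mathcal{C}$; by the domination result mentioned just before the lemma, I may also assume that $V^i_1\subseteq V^i_\ell$ dominates $G$.

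By Menger's theorem applied in $G$, there are $k$ internally vertex-disjoint paths between $\mathcal{C}$ and $V^i_\ell\setminus\mathcal{C}$ whose internal nodes lie in $V\setminus V^i_\ell$. I would shorten each of them to at most two internal nodes using the domination of $V^i_\ell$ exactly as in the proof of \Cref{lem:paths}, and then, whenever a length-$3$ path admits a length-$2$ subpath through one of its internal nodes, replace it so that the minimality condition (C) of \Cref{subsec:Connectors} holds. This produces $k$ internally vertex-disjoint potential connector paths satisfying (A), (B), and (C). An internal node $u$ of such a path is a fixed node in $V\setminus V^i_\ell$, and conditional on the history, its (layer, type) is essentially uniform on $\{\ell+1,\dots,L\}\times\{1,2,3\}$ (up to an $O(1/t)$ correction coming from the possibility that $u$'s layer is $\le\ell$ with class $\ne i$), so for any prescribed type $\tau$ one has $\Pr[\text{layer}(u)=\ell+1,\;\text{type}(u)=\tau\mid\text{history}]=\Omega(1/\log n)$. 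Hence a potential connector with one internal node becomes an actual connector with probability $\Omega(1/\log n)$ and one with two internal nodes with probability $\Omega(1/\log^2 n)$; in either case the expected number of actual connectors is $\Omega(k/\log^2 n)$. Since the $k$ potential connectors are internally vertex-disjoint, the corresponding indicator variables depend on disjoint sets of labels and are mutually independent, so Chernoff's inequality yields at least $\Omega(k/\log^5 n)$ actual vertex-disjoint connectors with probability far exceeding $1/2$; the extra $\log^3 n$ slack provides what is needed for the later union bounds over layers, classes, and components.

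For the independence across different layers $\ell\le L/2$, I would observe that the event analyzed at layer $\ell+1$ depends only on the labels of the nodes whose layer equals $\ell+1$, and these node-sets are disjoint for distinct $\ell$. The main obstacle I expect is precisely this conditioning bookkeeping: the class assignments in layers $2,\dots,\ell$ are complicated deterministic functions of the random labels, so care is needed to verify that, after conditioning on the algorithm's history through layer $\ell$, the label randomness used at layer $\ell+1$ is genuinely independent of the randomness used at other layers, so that the Chernoff argument and the across-layer independence claim both go through simultaneously.
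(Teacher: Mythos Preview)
There is a genuine gap in your conditioning argument. You propose to reveal every node's (layer, type) label as part of the history and then assert that, conditional on this history, an internal node $u$ of a Menger path still has (layer, type) essentially uniform on $\{\ell+1,\dots,L\}\times\{1,2,3\}$. But once all labels are revealed, $u$'s layer is determined; there is no residual randomness left. Your ``$O(1/t)$ correction'' is mis-sized: the event that $u$ has layer $\le\ell$ yet class $\ne i$ has unconditional probability roughly $\frac{\ell}{L}\cdot\frac{t-1}{t}$, which can be nearly $1/2$, and more to the point, after conditioning on any history rich enough to determine $V^i_\ell$ (hence $\mathcal{C}$, hence which $k$ paths Menger produces), you already know for each internal node whether its layer is $\le\ell$. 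Nothing rules out that the set $\bigcup_{j\ne i}V^j_\ell$ separates $\mathcal{C}$ from $V^i_\ell\setminus\mathcal{C}$ in $G$, in which case every one of your $k$ Menger paths passes through a node already fixed to some layer $\le\ell$ and can never land in layer $\ell+1$. The same issue undermines your independence-across-layers claim, since the events you analyze are not determined solely by the labels of nodes whose layer equals $\ell+1$.

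The paper's proof circumvents this precisely by invoking \Cref{thm:general} rather than applying Menger's theorem in $G$. Let $W^*_\ell$ be the set of nodes with layer in $\{\ell+1,\dots,L\}$; this is a random vertex sample with probability at least $1/2$, so \Cref{thm:general} gives that $G[W^*_\ell]$, and hence $G[W^*_\ell\cup V^i_\ell]$, has vertex connectivity $\Omega(k/\log^3 n)$ w.h.p. Menger is then applied in $G[W^*_\ell\cup V^i_\ell]$, which guarantees that all internal nodes lie in $W^*_\ell$ and thus genuinely retain fresh (layer, type) randomness over $\{\ell+1,\dots,L\}\times\{1,2,3\}$; each path then becomes a connector with probability $\Omega(1/\log^2 n)$, yielding $\Omega(k/\log^5 n)$ connectors. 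The factor $\log^5 n$ is exactly $\log^3 n$ from the sampling theorem times $\log^2 n$ from the per-path probability; your route omits the first step, and without it the argument cannot be completed as written.
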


\begin{lemma} [\textbf{Stronger Connector Abundance Lemma for Large
    Vertex Connectivity}] \label{lem:CAL-Part-Large} Assume $k =
  \Omega(\sqrt{n})$. W.h.p., for each class $i$ and layer ${\ell} \leq
  {\NumOfLayers}/{2}$ such that $N^i_{\ell}\geq 2$, each connected
  component $\mathcal{C}$ of $G[V^i_{\ell}]$ has at least
  $\Omega\big(\frac{k}{\log^2 n}\big)$ internally vertex-disjoint
  connector paths.
\end{lemma}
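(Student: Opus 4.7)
The plan is to follow the same template as the (to-be-proven) \Cref{lem:CAL-Part}, but to replace its final constant-probability argument with a Chernoff bound that becomes available precisely when $k$ is large enough to make the per-component failure probability super-polynomially small, after which a union bound over all classes, layers, and components yields the w.h.p.\ statement. Concretely, I would first fix a class $i$, a layer $\ell \leq \NumOfLayers/2$ with $N^i_\ell \geq 2$, and a connected component $\mathcal{C}$ of $G[V^i_\ell]$. I would then mimic the shortening argument used in the proof of \Cref{lem:paths}: Menger's theorem gives $k$ internally vertex-disjoint paths in $G$ between $\mathcal{C}$ and $V^i_\ell\setminus\mathcal{C}$, and since the layer-$1$ jump-start ensures (w.h.p., uniformly over $i$) that $V^i_1 \subseteq V^i_\ell$ dominates $G$, each of these paths can be shortened to a potential connector path with at most two internal nodes, yielding $k$ internally vertex-disjoint potential connector paths for $\mathcal{C}$.

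For each such potential connector path $P$, let $X_P$ be the indicator that $P$ becomes a genuine connector path, i.e.\ that each internal node of $P$ is in layer $\ell+1$ and has the type dictated by rules (D)/(E). Conditioning on the entire configuration of layers $1,\dots,\ell$ (which fixes $\mathcal{C}$ and the list of potential connector paths), every internal node $u$ of any $P$ has $\mathrm{layer}(u)>\ell$; given this, $\mathrm{layer}(u)$ is conditionally uniform on $\{\ell+1,\dots,\NumOfLayers\}$ and $\mathrm{type}(u)$ is independent uniform on $\{1,2,3\}$. Hence $\Pr[X_P=1]\geq c/\NumOfLayers^{\,2}$ for an absolute constant $c>0$ (short paths with one internal node only help). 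Since the $k$ potential connector paths are internally vertex-disjoint, the $X_P$'s depend on disjoint blocks of independent random choices and are therefore mutually independent conditional on the layer-$\leq\ell$ configuration.

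A Chernoff bound then yields
\[
\Pr\!\left[\,\textstyle\sum_P X_P < \tfrac{ck}{2\NumOfLayers^{2}}\,\right] \;\leq\; \exp\!\bigl(-\Omega(k/\NumOfLayers^{2})\bigr) \;=\; \exp\!\bigl(-\Omega(\sqrt{n}/\log^2 n)\bigr),
\]
which is $n^{-\omega(1)}$ under the hypothesis $k=\Omega(\sqrt{n})$. There are at most $n$ components per class, $t=O(k/\log^2 n)=O(n)$ classes, and $O(\log n)$ values of $\ell$ to consider, so a union bound over all $(i,\ell,\mathcal{C})$ incurs only a $\poly(n)$ overhead, which is easily dominated by the failure bound above. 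This produces the claimed $\Omega(k/\log^2 n)$ internally vertex-disjoint connector paths simultaneously for every component, in every class, at every layer $\ell \leq \NumOfLayers/2$, w.h.p.

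The main obstacle will be the clean decoupling of the random sources: one has to verify that conditioning on the (random) configuration of layers $1,\dots,\ell$ leaves the layer and type assignments of the nodes in layers $>\ell$ conditionally independent and (approximately) uniform, which is what makes the $X_P$'s independent and the Chernoff bound legitimate. Beyond that, no new combinatorial ingredient is required: the hypothesis $k=\Omega(\sqrt{n})$ is used solely to ensure that $\exp(-\Omega(k/\NumOfLayers^{2}))$ absorbs the $\poly(n)$ union-bound cost, something that fails in the smaller $k$ regime handled (with loss) by \Cref{lem:CAL-Part}.
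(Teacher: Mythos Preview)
There is a genuine gap. When you apply Menger's theorem in $G$ and shorten, the internal nodes of the resulting potential connector paths lie in $V\setminus V^i_\ell$, but that set does \emph{not} consist only of nodes with layer $>\ell$: it also contains every node of layers $1,\dots,\ell$ that was assigned to some class $j\neq i$. For any path $P$ with such an internal node you have $X_P=0$ deterministically, so your bound $\Pr[X_P=1]\geq c/\NumOfLayers^2$ is not valid for all $k$ paths. You give no argument that $\Omega(k)$ of the paths avoid this, and you cannot simply say ``each internal node has layer $>\ell$ with probability $\geq 1/2$'': the component $\mathcal{C}$, and hence the Menger paths themselves, already depend on which nodes fell into layers $\leq \ell$, so that randomness is no longer fresh once you condition on the layer-$\leq\ell$ configuration.

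A diagnostic red flag: if your argument worked as written, $\exp(-\Omega(k/\NumOfLayers^2))$ beats a $\poly(n)$ union bound already for $k=\omega(\log^3 n)$, so the hypothesis $k=\Omega(\sqrt{n})$ would be unused. In the paper's proof, $k=\Omega(\sqrt{n})$ is essential for a different reason. The paper first shows that $G[V^i_1]$ has at most $\eps n/k$ connected components (high minimum degree forces this), and then union-bounds over all $2^{\eps n/k}$ \emph{component cuts} of $G[V^i_1]$; any later component $\mathcal{C}$ of $G[V^i_\ell]$ intersects $V^i_1$ in exactly one such cut. For each fixed cut $A$, Menger is applied relative to $V^i_1$ only, so internal nodes merely avoid $V^i_1$ and their layer numbers are genuinely fresh; a Chernoff bound then gives per-cut failure probability $e^{-\Omega(k)}$. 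The assumption $k=\Omega(\sqrt{n})$ is precisely what makes $e^{-\Omega(k)}$ absorb the $2^{O(n/k)}$ cuts. This decoupling---freezing the cut structure at layer $1$ before revealing further layer randomness---is the combinatorial ingredient your proposal is missing.
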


The proof of \Cref{lem:CAL-Part} is relatively straightforward
application of \Cref{thm:general}.
\begin{proof}[Proof of \Cref{{lem:CAL-Part}}]
  Let $W^*_\ell$ be the set of all nodes with a layer number in
  $\{\ell+1,\dots, \NumOfLayers\}$. Since the probability of each node
  to be in $V^*_\ell$ is at least $1/2$ (because ${\ell} \leq
  {\NumOfLayers}/{2}$), \Cref{thm:general} shows that, w.h.p, the
  vertex-connectivity of $G[W^*_\ell]$ is $\Omega(\frac{k}{\log^3
    n})$. It is easy to see that therefore, the vertex-connectivity of
  $G[W^*_\ell \cup S^i_{\ell}]$ is also $\Omega(\frac{k}{\log^3
    n})$. Thus, for each component $\mathcal{C}$ of $G[V^i_{\ell}]$,
  we can follow the proof of \Cref{lem:paths}, this time using
  Menger's theorem on $G[W^*_\ell \cup S^i_{\ell}]$, and find
  $\Omega(\frac{k}{\log^3 n})$ internally vertex-disjoint potential
  connector paths in graph $G[W^*_\ell \cup V^i_{\ell}]$. It is clear
  that the internal nodes of these potential connector paths are not
  in $V^i_{\ell}$, which means they are in $W^*_\ell$. For each
  potential connector path, for each of its internal nodes, given that
  this node is in $W^*_\ell$, the probability that the node is in
  layer $\ell+1$ and has the type which satisfies rules (A) and (B) of
  \Cref{subsec:Connectors} is at least $\Theta(1/{\NumOfLayers}) =
  \Theta(1/\log n)$. Hence, the probability of each of these potential
  connector paths being a connector path is at least $\Theta(1/\log^2
  n)$. From internally vertex-disjointedness of the potential
  connector paths, and since there are $\Omega(\frac{k}{\log^3 n})$ of
  them, it follows that with probability at least $1/2$, component
  $\mathcal{C}$ has at least $\Omega(\frac{k}{\log^5 n})$ internally
  vertex-disjoint connector paths.
\end{proof}

The proof of \Cref{lem:CAL-Part-Large} is more involved. Intuitively,
for each class $i$, it first contracts components of $G[V^i_1]$ and
then argues about all $2^{O(n/k)}$ cuts of the resulting graph, using
the fact that the large vertex connectivity $k = \Omega(\sqrt{n})$
gives a good enough concentration to compensate for this large number
of cuts. This proof is deferred to \Cref{app9}.


\section{Concluding Remarks} 
\label{sec:discussions}

In this paper, we introduce CDS partition and CDS packing as a new
way of looking at the structure of graphs with a given vertex
connectivity $k$. 
We argue that CDS packing and CDS partition can be viewed as
the ``vertex world" counterparts of the well-known edge-disjoint
spanning trees (often attributed to 1961 works of
Nash-Williams\cite{Nash-Williams} and Tutte\cite{Tutte}). We show
that each $k$-vertex-connected graph with $n$ nodes has a CDS packing
of size $\Omega(k / \log n)$ and that this is existentially
optimal. We also show that any such graph has a CDS partition of
size $\Omega(k/\log^5 n)$ and if $k=\Omega(\sqrt{n})$, it has a CDS
partition of size $\Omega(k/\log^2 n)$.

Using the new perspective provided by CDS packing, we manage to break
the barrier of facing exponentially many small vertex cuts and prove
that after sampling vertices with probability $p$, the remaining
vertex connectivity is $\tilde{\Omega}(kp^2)$, w.h.p. We also explain
that CDS packing is useful in networking as it provides a backbone for
store-and-forward algorithms with optimal throughput. In a subsequent
work \cite{Distributed-CDS-Packing}, we show that a CDS packing of
size $O(k/\log n)$ can also be computed efficiently by a distributed
algorithm with time complexity $\tilde{O}(n/k)$.

Our work leaves a number of interesting and important open
questions. First of all, most of our results are only tight up to
logarithmic factors and it would certainly be interesting to try to
close these gaps. When sampling each node with probability $p$,
assuming that $kp^2$ is large enough, we suspect that the remaining
vertex connectivity should be $\Theta(kp^2)$. We also believe that
construction from Section \ref{sec:partition} to compute a CDS
partition should actually allow to obtain a CDS partition of size
$\Omega(k/\log^2n)$ for all values of $k$ and not just for
$k=\Omega(\sqrt{n})$. Note that our approach would immediately give
this if we could prove that when sampling with probability $1/2$, the
remaining vertex connectivity is still $\Omega(k)$ rather than
$\Omega(k/\log^3 n)$ as we show. Note that even proving an
$\Omega(k/\log^2 n)$ lower bound on the size of a maximum CDS
partition would still leave a logarithmic gap w.r.t.\ the upper bound
of Theorem \ref{thm:BadExample}. Finally, we hope that CDS packings
and partitions can be used to approach other problems related to
vertex connectivity. In particular, if true, the approach might help
to show that the sizes all vertex cuts remain around their
expectations after vertex sampling. We believe that such a result
would be a major contribution.


\section{Acknowledgment}
We are grateful to David Karger for helpful discussions and we also
thank Bernhard Haeupler for valuable comments.

{\small
\bibliographystyle{abbrv}
\bibliography{ref}
}

\appendix
\section{Upper Bounding the Size of CDS-Packing}\label{sec:BadExample}
In this section, we first present the missing proof of \Cref{prop:upp}, and then, in the more interesting part, prove \Cref{thm:BadExample}, which shows that in some graphs, the maximum CDS packing size is a $\Theta(\log n)$ factor smaller than the vertex connectivity.

\begin{proof}[Proof of \Cref{prop:upp}] Consider a vertex cut $\mathcal{C} \subseteq V$ of $G$
    that has size exactly $k$. Each CDS $\tau$ must include at least
    one node in $\mathcal{C}$. For each CDS $\tau\in \CDS(G)$, pick one
    node $v \in C$ as a representative of $\tau$ in the cut and let us
    denote it by $\mathit{Rep}(\tau)$. Thus, for any CDS-Packing of $G$, we
    have
    $$\sum_{\tau \in \CDS(G)} x_{\tau} = \sum_{v\in \mathcal{C}} \;\;\; \sum_{\tau \in \CDS(G) \atop s.t.\; v = \mathit{Rep}(\tau)} x_{\tau} \leq \sum_{v\in \mathcal{C}} 1= |\mathcal{C}| =k. $$
    Since the above holds for any CDS-Packing of $G$, we get that
    $K'_{\CDS}(G) \leq k$.
\end{proof}

\medskip
Let us recall the statement of \Cref{thm:BadExample}
\begin{theoremR}{thm:BadExample} For any large enough $n$ and any $k\geq 1$,
  there exists an $n$-node graph $G$ with vertex connectivity $k$ such
  that $K'_{\CDS}(G) =O\big(\frac{k}{\log n}+1\big)$.
\end{theoremR}

To prove this theorem, in \Cref{lem:badgraph}, we present a graph $\mathsf{H}$ with vertex connectivity $k$,
size between $2^k$ and $4^k$, and $K'_{\CDS}(\mathsf{H}) < 2$. This lemma
proves the theorem for $k=O(\log n)$. To prove the theorem for the case of
larger vertex-connectivity compared to $n$, in \Cref{lem:badgraph_larger_k}, we look at randomly chosen sub-graphs of
$\mathsf{H}$ and apply the probabilistic method~\cite{ProbMethod92}.

\begin{lemma} \label{lem:badgraph}For any $k$, there exists an
  $n$-node graph $\mathsf{H}$ with vertex connectivity $k$ and
  $n\in[2^k, 4^k]$ such that $K'_{\CDS}(\mathsf{H}) < 2$.
\end{lemma}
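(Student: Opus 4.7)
The plan is to exhibit an explicit graph $\mathsf{H}$ satisfying all three requirements simultaneously: $n \in [2^k, 4^k]$, $\kappa(\mathsf{H}) = k$, and $K'_{\CDS}(\mathsf{H}) < 2$. The key reduction I would use is the LP-dual reformulation of CDS packing: any assignment of non-negative vertex weights $y_v$ satisfying $\sum_{v \in \tau} y_v \geq 1$ for every CDS $\tau$ yields $K'_{\CDS}(\mathsf{H}) \leq \sum_v y_v$ by LP duality. In particular, if I can show that every CDS of $\mathsf{H}$ has size at least $(1/2 + \varepsilon) n$ for some absolute constant $\varepsilon > 0$, then the uniform assignment $y_v = 1/((1/2+\varepsilon)n)$ is feasible and gives total weight $\tfrac{2}{1+2\varepsilon} < 2$, establishing the desired bound.

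To design such an $\mathsf{H}$, I would use a graph that is $k$-vertex-connected yet ``path-like'' enough that every connected dominating set has to contain a strict majority of the vertices. A natural candidate is a thickened generalized $\theta$-graph: take two distinguished pole vertices $u,v$, together with $k$ internally vertex-disjoint ``chains'' between them, where each chain is a long sequence of small cliques connected to consecutive cliques by perfect matchings (or a similarly sparse-but-structured thickening). Tuning the number of chains and the length of each chain lets us control the connectivity (which should come out to $k$ via Menger's theorem, with $k$ disjoint $u$-$v$ paths being provided by one strand per chain and a matching $k$-vertex cut being provided by a ``bottleneck'' layer) and simultaneously the vertex count $n$, so that $n$ lands in the target window $[2^k, 4^k]$ by adjusting chain length appropriately.

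The main technical step, and the principal obstacle, is proving the structural lemma that every CDS of $\mathsf{H}$ has size at least $(1/2+\varepsilon)n$. I would carry this out by an accounting argument on each chain: show that the combination of the domination requirement (each internal vertex needs a dominator in $S$) with the connectivity requirement (that $G[S]$ be connected, so that $S$'s trace on each chain cannot be broken by long gaps) forces $S$ to contain all but a constant number of vertices per chain. Summing these per-chain contributions over all $k$ chains and adding the two poles will yield the bound $|S| > (1/2 + \varepsilon) n$. The delicate calibration is that the thickening must be dense enough at each internal position to guarantee $\kappa(\mathsf{H}) = k$, yet structurally restricted enough that a small connected subset cannot ``shortcut'' across several chain positions to dominate most of a chain from outside. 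Achieving both simultaneously in the regime $n \in [2^k, 4^k]$ is the heart of the argument; once the structural lemma is in hand, the three desired properties of $\mathsf{H}$ follow directly, and the LP-duality argument concludes the proof.
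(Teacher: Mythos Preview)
Your proposal has a genuine gap in the construction. In the generalized $\theta$-graph you describe, the two poles $u$ and $v$ are single vertices, and every internal vertex of every chain can reach the rest of the graph only through $u$ or through $v$. Hence $\{u,v\}$ is a vertex cut of size $2$, so $\kappa(\mathsf{H})\le 2$ regardless of how many chains you use or how you thicken them. The construction therefore cannot have vertex connectivity $k$ for any $k\ge 3$.

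There is also a more fundamental obstacle to the overall strategy. You aim to use \emph{uniform} dual weights, which requires every CDS to contain more than $n/2$ vertices. But once you repair the connectivity (say by thickening the chain positions to cliques of size $k$ joined by matchings, with no pole vertices at all), small CDSs appear immediately: following a single strand through the matchings gives a path that touches one vertex per clique, and since each clique is dominated by any one of its vertices, this yields a CDS of size roughly $n/k \ll n/2$. In general, high vertex connectivity tends to \emph{create} small connected dominating sets rather than preclude them, so the uniform-weight route is the wrong reduction here.

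The paper's construction avoids this by concentrating the dual weights on a tiny designated set rather than spreading them uniformly. It takes a clique $\mathcal{A}$ on $2k$ vertices together with $\binom{2k}{k}$ pendant vertices $\mathcal{B}$, one for each $k$-subset of $\mathcal{A}$, each pendant adjacent exactly to its $k$-subset. Any CDS must contain at least $k+1$ vertices of $\mathcal{A}$, since otherwise some $k$-subset of $\mathcal{A}$ is missed and the corresponding pendant is undominated. Putting weight $\tfrac{1}{k+1}$ on each vertex of $\mathcal{A}$ and zero on $\mathcal{B}$ gives a feasible dual of total weight $\tfrac{2k}{k+1}<2$. Note that in this graph the minimum CDS has only $k+1$ vertices, so the uniform-weight approach would give nothing; the whole point is that every CDS uses more than half of the small set $\mathcal{A}$, not more than half of $V$.
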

\begin{proof} 
  We obtain graph $\mathsf{H}$ by simple modifications to the graph
  presented by Sanders et~al.~\cite{sandersPolynomial} for proving an
  $\Omega(\log n)$ network coding gap in the model where each node can
  send distinct unit-size messages to its different neighbors.

  The graph $\mathsf{H}$ has two layers. The first layer is a clique
  of $2k$ nodes. The second layer has $\binom{2k}{k}$ nodes, one for
  each subset of size $k$ of the nodes of the first layer. Each second
  layer node is connected to the $k$ first-layer nodes of the
  corresponding subset. Note that the total number of nodes is
  $\binom{2k}{k}+2k \in [2^k, 4^k]$.  Let $\mathcal{A}$ and
  $\mathcal{B}$ denote the set of nodes in the first and
  second layer, respectively.

  First, we show that $\mathsf{H}$ has vertex connectivity $k$. Since
  the degree of each second layer node is exactly $k$, it is clear
  that the vertex connectivity of $\mathsf{H}$ is at most $k$. To
  prove that the vertex connectivity of $\mathsf{H}$ is at least $k$,
  let $u$ and $v$ be two arbitrary nodes of $\mathsf{H}$. We show that
  there are at least $k$ internally vertex disjoint paths between $u$
  and $v$. If $u$ and $v$ are both in $\mathcal{A}$, then there is one
  direct edge between $v$ and $u$ and there are $2k-2$ paths of length
  $2$ between them. If exactly one of $v$ and $u$ is in $\mathcal{A}$,
  e.g., suppose $u \in \mathcal{A}$ and $v \in \mathcal{B}$, then $u$
  is directly connected to $k$ neighbors of $v$. Otherwise, if both
  $u$ and $v$ are in $\mathcal{B}$, then let $p$ be the size of the
  intersection of the neighbors of $v$ and $u$. Note that these
  neighbors are all in $\mathcal{A}$. It is clear that $u$ and $v$
  have exactly $p$ paths of length $2$ between themselves and $k-p$
  paths of lengths $3$, and that these paths are internally vertex
  disjoint.

  To see that $K'_{\CDS}(H) < 2$, first note that each CDS $\tau$ must
  include at least $k+1$ nodes of $\mathcal{A}$. This is because,
  otherwise, there are at least $k$ nodes of $\mathcal{A}$ that are
  not included in $\tau$ and thus, there is a node in
  $\mathcal{B}$---corresponding to a subset of size $k$ of these
  uncovered nodes of $\mathcal{A}$---which is not dominated by
  $\tau$. Thus we have,
  $$\sum_{v \in \mathcal{A}} \sum_{\tau \in \CDS(\mathsf{H}) \atop
    s.t.\; v \in \tau} x_{\tau} 
  \geq (k+1) \cdot \sum_{\tau \in \CDS(\mathsf{H})} x_{\tau}.$$
  On the other hand we have, 
  $$\sum_{v \in \mathcal{A}} \sum_{\tau \in \CDS(\mathsf{H}) \atop
    s.t.\; v \in \tau} x_{\tau} \leq \sum_{v \in \mathcal{A}} 1 =
  |\mathcal{A}| = 2k, $$ and thus we can conclude that $\sum_{\tau \in
    \CDS(\mathsf{H})} x_{\tau} \leq \frac{2k}{k+1} <2$. Since this
  holds for any CDS-Packing of $\mathsf{H}$, we get $K'_{\CDS}(H) <
  2$.
\end{proof}

Note that in the above construction, we have $K_{\CDS}(\mathsf{H}) =
1$ as $\mathsf{H}$ is connected and $K_{\CDS}(\mathsf{H})$ has to be
an integer.

\begin{lemma} \label{lem:badgraph_larger_k}For each large enough $k$ and $\eta \in [4k, 2^{k}]$,
  there exists a sub-graph $H' \subseteq \mathsf{H}$ that has $\eta$ nodes and vertex
  connectivity $k$ but $K'_{\CDS}(H') = O(\frac{k}{\log
    {\eta}})$.
\end{lemma}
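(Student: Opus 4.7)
The plan is to take $H' = \mathsf{H}[\mathcal{A} \cup \mathcal{B}']$, where $\mathcal{A}$ is kept entirely and $\mathcal{B}' \subseteq \mathcal{B}$ is a uniformly random subset of size exactly $\eta - 2k$; the assumption $\eta \in [4k, 2^k]$ guarantees $\eta - 2k \in [2k, |\mathcal{B}|]$, so this is well-defined. Then $|V(H')| = \eta$, and the three-case argument of \Cref{lem:badgraph} for vertex connectivity transfers verbatim: each case only uses the fact that $\mathcal{A}$ is a complete $2k$-clique and that $\mathcal{B}$-nodes have their neighborhoods inside $\mathcal{A}$, both of which are preserved in $H'$. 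Hence $H'$ is $k$-vertex-connected, and the bound is tight since each $b \in \mathcal{B}'$ retains degree exactly $k$.

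The remainder of the proof reduces CDS packing to a hitting-set problem on $\mathcal{A}$. The key observation is that, without loss of generality, every CDS $\tau$ in a maximum packing may be taken to be a subset of $\mathcal{A}$: for any $b \in \tau \cap \mathcal{B}'$, connectivity of $H'[\tau]$ forces some neighbor of $b$ to lie in $S_\tau := \tau \cap \mathcal{A}$, and this neighbor already dominates $b$, so replacing $\tau$ with $S_\tau$ keeps the CDS property and only loosens the packing constraints. Since $\mathcal{A}$ is a clique, a set $S \subseteq \mathcal{A}$ is a CDS of $H'$ precisely when $S$ is a \emph{transversal} of the hypergraph $\{N(b) : b \in \mathcal{B}'\}$. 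Writing $\sigma$ for the minimum transversal size and summing the packing constraints over $v \in \mathcal{A}$ gives
\[
\sigma \cdot \sum_S x_S \;\leq\; \sum_S |S| \, x_S \;\leq\; |\mathcal{A}| \;=\; 2k,
\]
so $K'_{\CDS}(H') \leq 2k/\sigma$, and it suffices to exhibit a random $\mathcal{B}'$ for which $\sigma = \Omega(\log \eta)$.

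The main technical step is a probabilistic-method union bound. Fix $T \subseteq \mathcal{A}$ with $|T| = t$: the $\mathcal{B}$-nodes whose neighborhood misses $T$ are exactly those whose $k$-subset lies in the $(2k-t)$-set $\mathcal{A}\setminus T$, giving $\binom{2k-t}{k}$ such nodes; this is a fraction $r(t) = \prod_{i=0}^{t-1}\tfrac{k-i}{2k-i}$ of $|\mathcal{B}|$, and an elementary estimate shows $r(t) \geq 3^{-t}$ whenever $t \leq k/2$. Hence
$\Pr[\mathcal{B}' \text{ misses all of them}] \leq (1 - r(t))^{\eta - 2k} \leq \exp\bigl(-\Omega(\eta \cdot 3^{-t})\bigr)$. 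Union-bounding over all $T$ of size at most $\sigma = c \log_2 \eta$ produces
\[
\sum_{t=0}^{\sigma} \binom{2k}{t} \exp\bigl(-\Omega(\eta \cdot 3^{-t})\bigr) \;\leq\; \sum_{t=0}^{\sigma} (2k)^t \exp\bigl(-\Omega(\eta^{\,1 - c \log_2 3})\bigr),
\]
which is $o(1)$ for any fixed constant $c < 1/\log_2 3$, using $2k \leq \eta$ and $\sigma = O(\log \eta)$ to dominate the polynomial prefactor. Thus a valid $\mathcal{B}'$ exists with $\sigma \geq c \log \eta$, completing the proof.

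The main obstacle is calibrating this union bound uniformly across the full range $\eta \in [4k, 2^k]$. The lower bound $\eta \geq 4k$ is needed so that $\sigma = \Omega(\log \eta)$ is meaningful and that $\mathcal{B}'$ has enough room to contain witnesses for many candidate cuts, while the upper bound $\eta \leq 2^k$ keeps $k$ comparable to $\log \eta$; this is what keeps $\sigma \leq k/2$ so that $r(t) \geq 3^{-t}$ remains valid throughout, and prevents the combinatorial factor $(2k)^\sigma$ from dwarfing the exponential decay $\exp(-\Omega(\eta \cdot 3^{-\sigma}))$.
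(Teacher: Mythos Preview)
Your proof is correct and follows essentially the same approach as the paper: keep all of $\mathcal{A}$, take a random subset of $\mathcal{B}$, and union-bound over small subsets $W\subseteq\mathcal{A}$ to show that none of them dominates the chosen $\mathcal{B}$-nodes, forcing every CDS to use $\Omega(\log\eta)$ vertices of $\mathcal{A}$ and hence $K'_{\CDS}\le 2k/\Omega(\log\eta)$. The only cosmetic differences are that the paper samples each $\mathcal{B}$-node independently with a tuned probability (and then bounds the size and pads to exactly $\eta$) rather than drawing a fixed-size subset, and it does not explicitly phrase the argument via the hitting-set/transversal reduction you use.
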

\begin{proof}
  Pick an arbitrary $k \geq 64$, fix an $\eta \in [4k, 2^{k}]$ and let
  $\beta = \frac{\log{\eta}}{8}$. Consider a random subset $V_z
  \subseteq V$, where $V_z$ includes all nodes of $\mathcal{A}$ and
  for each node $u \in \mathcal{B}$, $u$ is independently included in
  $V_z$ with probability $p$, where
  $$p = \frac{65 \beta^2}{\binom{2k-\beta}{k}}.$$
  We now look at the sub-graph $H_z$ of $\mathsf{H}$ induced on
  $V_z$. With the same argument as for $\mathsf{H}$, we get that for
  any such $V_z$, the graph $H_z$ has vertex connectivity exactly $k$.
  We show that (a) with probability at least $\frac{1}{2}$, $V_z$ is
  such that $K'_{\CDS}(H_z) < \frac{2k}{\beta} = O(\frac{k}{\log
    {\eta}})$, and (b) with probability at least $\frac{3}{4}$, we
  have $|V_z| \leq \eta$. A union bound then completes the proof.

  \paragraph{Property (a)} We first show that with probability at
  least $\frac{1}{2}$, $V_z$ is such that there does not exist a
  subset of size $\beta$ of the nodes of $\mathcal{A}$ that
  dominates $V_z$.
  %
  %
  For each subset $W \subset \mathcal{A}$ such that $|W|=\beta$,
  there are $\binom{2k-\beta}{k}$ nodes in $\mathcal{B}$ which are
  not dominated by $W$. Thus, for $W$ to dominate $V_z$, none of these
  second layer nodes should be included in $V_z$. The probability for
  this to happen is $$(1-p)^{\binom{2k-\beta}{k}} \leq
  e^{-65\beta^2}$$ There are $\binom{2k}{\beta}$ possibilities for set
  $W$. Hence, using a union bound, the probability that there exists
  such a set $W$ that dominates $V_z$ is at most
  \begin{eqnarray}
    e^{-65\beta^2} \binom{2k}{\beta} \leq e^{-65\beta^2} \cdot (\frac{2ek}{\beta})^{\beta} &\stackrel{(\dagger)}{<}& e^{-65\beta^2} \cdot (\eta^2)^{\beta}\nonumber \\
    &=& e^{-65\beta^2 + 64\beta^2} \leq \frac{1}{2}\nonumber,
  \end{eqnarray}
  where Inequality $(\dagger)$ follows since $64 \leq k \leq
  \frac{\eta}{4}$, which gives $2ek < k^2 <\eta^2$.

  Thus, with probability at least $\frac{1}{2}$, $V_z$ is such that
  each CDS of $H_z$ includes at least $\beta+1$ nodes of
  $\mathcal{A}$. From this, similar to the last part of the proof of
  \Cref{lem:badgraph}, we have that, $\sum_{\tau \in \CDS(\mathsf{H})}
  x_{\tau} \leq \frac{2k}{\beta+1} < \frac{2k}{\beta}$. Since this
  holds for any packing of $H_z$, we get that with probability at
  least $\frac{1}{2}$, $V_z$ is such that $K'_{\CDS}(H_z) <
  \frac{2k}{\beta}$.

  \paragraph{Property (b)} Note that $\mathbb{E}[|V_z|] = 2k +p
  \cdot\binom{2k}{k}$. Substituting $p = \frac{65
    \beta^2}{\binom{2k-\beta}{k}}$ and noting that $\beta \leq
  \frac{k}{2}$, we get
  \begin{eqnarray}
    \mathbb{E}[|V_z|] -2k&=&  p \cdot\binom{2k}{k} = 65\beta^2 \cdot \frac{\binom{2k}{k}}{\binom{2k-\beta}{k}} \nonumber\\
    &=& 65\beta^2 \cdot\frac{2k}{2k-\beta} \cdot \frac{2k-1}{2k-\beta-1} \dots \frac{k+1}{k-\beta+1} \nonumber\\
    &\leq& 65\beta^2 \cdot (1+ \frac{2\beta}{k})^{k}\nonumber\\
    &\leq&  \frac{65\log^2\eta }{64}\cdot \eta^{\frac{1}{4}}    
    \ \leq\ \frac{\eta}{4}.\nonumber
  \end{eqnarray}
  As the second-layer nodes are picked independently, for $\eta$
  sufficiently large, we can apply a Chernoff bound to get $\Pr[|V_z|
  -2k > \frac{\eta}{2}] \leq \frac{1}{4}$. Since $2k \leq
  \frac{\eta}{2}$, we then obtain $\Pr[|V_z| > \eta] \leq
  \frac{1}{4}$. If desired, we can adjust the number of nodes to
  exactly $\eta$ by adding enough nodes in the second layer which are
  each connected to all nodes of the first layer.
 \end{proof}



\section{Missing Proof of \Cref{subsec:packVSthru}: CDS Packing vs. Throughput}
\label{sec:App5}
In this section, we prove \Cref{prop:thruConnection}. For simplicity, we restate the theorem.

\begin{theoremR}{prop:thruConnection} A CDS packing with size $t$ provides a store-and-forward backbone with broadcast throughput $\Omega(t)$ messages per round. Inversely, a store-and-forward broadcast algorithm with throughput $t$ messages per round induces a CDS packing of size at least $t$.
\end{theoremR}
\begin{proof} [Proof of \Cref{prop:thruConnection}]
  First consider a CDS $\tau$ and suppose that the graph induced by
  $\tau$ has diameter $D_{\tau}$. Using $\tau$, we can perform $p$
  broadcasts (or multicast or unicasts) in time $O(p+D_{\tau})$. This
  can be seen as follows: Since $\tau$ is a dominating set, we can
  deliver each message to a node of $\tau$ in at most $p$
  rounds. Because $\tau$ is connected, $O(p + D_{\tau})$ rounds are
  enough to broadcast the messages to all nodes in $\tau$. Finally,
  because $\tau$ is a dominating set, at most $p$ more rounds are
  enough to deliver the messages to all the desired destination
  nodes. Hence, a CDS structure allows for performing broadcasts with
  an (amortized) rate of $\Omega(1)$ messages per round. In other
  words, a CDS can be viewed as a communication backbone with
  throughout $\Omega(1)$ messages per round.

  Consequently, $t$ vertex-disjoint CDS sets form a communication
  backbone with throughput of $\Omega(t)$ messages per
  round. Intuitively, we can use those $t$ vertex-disjoint sets in
  parallel with each other and get throughput of $\Omega(1)$ message
  per round from each of them. For a more formal description, consider
  $p$ broadcasts such that no more than $q$ broadcasts have the same
  source node. We first deliver each messages to a randomly and
  uniformly chosen CDS set. This can be done in time at most $q$. With
  high probability, the number of messages in each CDS is
  $O(\frac{p}{t} + \log n)$ and thus, we can simultaneously broadcast
  messages in time $O(\frac{p}{t}+\log n + D_{\max})$ where $D_{\max}$
  is the maximum diameter of the CDSs. Thus, the total time for
  completing all the broadcasts is $O(q + \frac{p}{t} + \log n+
  D_{\max})$. That is, we can perform the broadcasts with a rate
  (throughput) of $\Omega(\min\{t, p/q\})$. Note that since each
  source can only send one packet per round, if $ q \leq t$, then the
  maximum achievable throughout with any algorithm including network
  coding approaches is at most $q$ packets per round. In other words,
  in that case, the bottleneck is not the communication protocol but
  rather the sources of the messages. As long as no node is the source
  of more than $\Theta(p/t)$ messages, $t$ vertex-disjoint CDS sets
  form a communication backbone with throughput $\Omega(t)$. 

  Similarly one can see that a CDS packing with size $t$ provides a
  backbone with a throughput of $\Omega(t)$ messages per round. The
  only change with respect to above description is that now each node
  $v$ spends a $x_\tau$-fraction of its time for sending the messages
  assigned to CDS $\tau$ for every $\tau$ such that $v \in
  \tau$. Further, messages are assigned to each CDS $\tau$ with
  probability proportional to $x_\tau$. We remark here that even
  though this scheme provides a backbone with throughput $\Omega(t)$,
  if the weights $x_\tau$ are too small, the outlined time sharing
  might impose a considerably large additive term on the overall time
  for completing the broadcasts. In fact since the number of potential
  CDS sets can be exponential, the time sharing might lead to
  exponentially large additive terms. Note that in the CDS packing we present in \Cref{thm:packing}, each CDS has weight at least
  $\Omega(1/\log n)$ and thus using the partition also leads to an asymptotically optimal throughput for a relatively small number of broadcast messages.  

  Let us now argue that a broadcast protocol with throughput $t$ also
  leads to a CDS packing of size $t$. Suppose that there exists a
  (possibly large enough) number $p$ and a store-and-forward algorithm
  which broadcasts $p$ messages (originating from potentially
  different sources) in $T\leq \frac{p}{t}$ rounds. For each message
  $\sigma$ that is being broadcast, define set $S(\sigma)$ to be the
  set of nodes that send $\sigma$ in some round of the
  algorithm. Clearly $S(\sigma)$ induces a connected sub-graph and
  because every node needs to receive the message $S(\sigma)$ also is
  a dominating set.  For each node $v$ and message $\sigma$, let
  $N_{\sigma}(v)$ be the number of rounds in which node $v$ sends
  message $\sigma$ and let $y_{\sigma}(v)=
  \frac{N_{\sigma}(v)}{T}$. Moreover, for each CDS $\tau$ such that
  $v\in \tau$, let 
  \[
  z_{\tau}(v) = 
  \sum_{\sigma \atop S(\sigma) =\tau \,\land\, v \in \tau} 
  y_{\tau}(v).
  \]
  Finally, let $x_{\tau} = \min_{v\in \tau} \{z_{\tau}(v)\}$. Given
  these parameters, first notice that for each node $v$, we
  have 
  \[
  \sum_{\tau \atop v\in \tau} x_{\tau} \leq \sum_{\tau \atop
    v\in \tau} z_{\tau}(v) = \sum_{\tau \atop v\in \tau} \sum_{\sigma
    \atop S(\sigma)=\tau} y_{\sigma}(v) = \sum_{\sigma}
  \frac{N_{\sigma}(v)}{T} \stackrel{(\dagger)}{\leq} 1.
  \]
  Here, Inequality ($\dagger$) is because in each round, node $v$ can
  send at most one message and thus, $\sum_{\sigma} N_{\sigma}(v) \leq
  T$. On the other hand, we show that $\sum_{\tau} x_{\tau} \geq
  \frac{p}{T} = \Omega(t)$. For this purpose, consider a CDS $\tau$
  and let $u^*$ be a node such that $z_{\tau}(u^*) = x_{\tau}$. Since
  each message $\sigma$ such that $S(\sigma) = \tau$ is sent at least
  once by $u^*$, we have
\begin{eqnarray}
\sum_{\sigma \atop S(\sigma) = \tau} 1 \leq  \sum_{\sigma \atop S(\sigma) = \tau} N_{\sigma}(u^*) =& \sum_{\sigma \atop S(\sigma) = \tau} y_{\sigma}(u^*) \cdot T\nonumber \\
=& z_{\tau}(u^*) \cdot T= x_{\tau} \cdot T \nonumber
\end{eqnarray} 
Moreover, we have that $$p =\sum_{\sigma} 1 = \sum_{\tau} \sum_{\sigma \atop S(\sigma) = \tau} 1  \leq \sum_{\tau} x_{\tau} \cdot T$$ 
Thus, $\sum_{\tau} x_{\tau} \geq \frac{p}{T}$. Since $T\leq \frac{p}{t}$, we get that $\sum_{\tau} x_{\tau} \geq t$.
\end{proof}


\section{Missing Proofs of \Cref{sec:threshold}}\label{app7}

\begin{proof}[Proof of \Cref{lem:dom}] 
  Note that since $G$ has vertex connectivity $k$, each node $v \in V$ has at least $k$ neighbors.
  Therefore, each virtual node $\nu \in \mathcal{V}$ has in expectation at least $\frac{kp\NumOfLayers}{2\NumOfLayers} = \alpha\sqrt{k}\log n/2 = \Omega(\log n)$
  virtual neighbors in $\mathcal{S}_{\frac{\NumOfLayers}{2}}$. For every $\lambda>0$ and a sufficiently large $\alpha>0$, a standard Chernoff argument combined with union bound
  over all virtual nodes implies that each node has at least one neighbor in $\mathcal{S}_{\frac{\NumOfLayers}{2}}$, w.h.p. 
\end{proof}

\begin{proof}[Proof of \Cref{lem:FML}]
  Assume $\mathcal{S}_{\NumOfLayers/2}\in \mathcal{D}$. Then, for the first part of lemma, note that since $\mathcal{S}_{\NumOfLayers/2}\in \mathcal{D}$ and for each layer $\ell \geq \NumOfLayers/2$, $\mathcal{S}_\frac{\NumOfLayers}{2} \subseteq S_\ell$, we get that $S_{\ell}$ dominates set $\mathcal{V}$. Thus, each virtual sampled node in layer $\ell+1$ has a neighbor in $S_\ell$ which means that each connected component of $\mathcal{G}[\mathcal{S}_{\ell+1}]$ contains at least one connected component of $\mathcal{G}[\mathcal{S}_{\ell}]$. Hence, $N_{\ell+1} \leq N_{\ell}$, which also means that $M_{\ell+1} \leq M_\ell$.
	
	For the second part of the lemma, fix an arbitrary $\mathcal{T}\in \mathcal{D}_\ell$ and suppose that $\mathcal{S}_\ell=\mathcal{T}$. Let $X$ be the number of components of $\mathcal{G}[\mathcal{S}_{\ell}]$ for which there is
  at least one connector path in layer $\ell+1$. Note that each such component gets merged with at least one other component. By Lemma \ref{lem:paths}, the
  expectation of $X$ is at least $\E[X]\geq N_\ell/2$. Let $q$ be
  the probability that $X\geq N_\ell/4$. We have
  $\E[X]=\sum_{x=1}^{N_\ell}x\Pr(X=x)<q\cdot N_\ell +
  (1-q)\frac{N_\ell}{4}$. Together with the upper bound $N_\ell/2$
  on $\E[X]$, this gives $q> 1/3$. If $X\geq N_\ell/4$, at least
  $1/4$ of the components of $\mathcal{G}[\mathcal{S}_{\ell}]$ have at least one connector path in layer $\ell+1$, i.e., are connected to some other
  component of $\mathcal{G}[\mathcal{S}_{\ell}]$. In that case, $N_{\ell+1}\leq \frac{7}{8}\cdot N_{\ell}$. Thus, $\Pr\big[M_{{\ell}+1}
  \leq \frac{9}{10} \cdot M_{\ell} \,|\,
  \mathcal{S}_\ell=\mathcal{T}\big]\geq 1/3$. 
\end{proof}

\begin{proof} [Proof of \Cref{thm:threshold}]
  From \Cref{lem:dom}, we know that with high probability, $\mathcal{S}_\frac{\NumOfLayers}{2}$ is a dominating set. Assuming that $\mathcal{S}_\frac{\NumOfLayers}{2}$ is a dominating set, we can use \Cref{lem:FML}. Thus, we get that with addition of sampled virtual nodes of each layer $\ell \geq \NumOfLayers/2$, the number of components of the sampled virtual nodes goes down by a constant factor, with probability at least $1/3$ and
  independent of previous layers. Hence, by a standard Chernoff argument, we obtain that after $\NumOfLayers=\lambda \log n$ layers, with high probability, $\mathcal{S}$ is a connected dominating set of graph $\mathcal{G}$. Thus, $S$ is a connected dominating set of graph $G$, and hence, in particular, $G[S]$ is connected.
\end{proof}

We conclude this section by presenting Observation \ref{obs:neg-side}, which shows that the bound of \Cref{thm:threshold} is optimal up to an $O(\sqrt{\log n})$ factor. 

\begin{observationR}{obs:neg-side} For every $k$ and $n$, there
  exists an $n$-node graph $G$ with vertex connectivity $k$ such that
  if we independently sample vertices with probability $p\leq
  \sqrt{\frac{\log(n/2k)}{2k}}$, then the subgraph induced by the
  sampled nodes is disconnected with probability at least $1/2$.
\end{observationR}

\begin{proof}[Proof of Observation \ref{obs:neg-side}]
  Consider a graph $G$ composed of a chain of $\frac{n}{k}$ cliques, each
  of size $k$, where each two consecutive cliques on the chain are
  connected by a matching of size $k$. For simplicity, assume that
  $\frac{n}{k}$ is an even integer. Clearly, this graph has vertex
  connectivity $k$. Suppose that the sampling probability $p$ is less than $\sqrt{\log (\frac{n}{2k})/(2k)}$. We show that, with
  probability at least $1/2$, the graph induced on the sampled vertices is disconnected.
  
  Let us number the cliques from $1$ to $\frac{n}{k}$. First note that
  some nodes of each clique are sampled w.h.p. Let $J =
  \set{j | j \in [1,\frac{n}{k}] \land j \equiv 0 \pmod 2 }$, i.e., the
  set of even numbers in the range $[1,\frac{n}{k}]$. If the sampled graph is connected, then for each $j \in J$, at least on of the edges of the
  matching connecting clique $j-1$ with clique $j$ has to be in the
  sampled graph. For this, there have to be two adjacent
  nodes $u$ from clique $j-1$ and $v$ from clique $j$ that are both sampled. We call such a pair of nodes
  \emph{connecting}. For each $j \in J$, there are $k$ possible
  choices for a connecting pair and each choice has probability
  $\frac{1}{s^2}$ to be sampled (to be connecting). Moreover,
  these pairs are vertex disjoint. Thus, for each $j \in J$, the
  probability that the sampled graph has a connecting pair between cliques $j-1$
  and $j$ is 
  \[
  1-\left (1-\frac{1}{{s}^2}\right)^{k}
  \ \leq\ 1 -4^{- k/s^2}
  \ <\ 1 -4^{- \log(\frac{n}{2k})/2}
  \ =\ 1 -\frac{2k}{n}.
  \]
  Since $J$ only contains even values, the pairs of consequent cliques
  that we consider are disjoint. Thus, the events whether they have a
  connecting pair or not are independent. Therefore, the probability
  that we have at least one connecting pair for each value of $j \in
  J$ is at most
  $\big(1 - \frac{2k}{n}\big)^{|J|} = \big(1 - \frac{2k}{n}\big)^{\frac{n}{2k}}
  <\frac{1}{e} < 1/2 $.
\end{proof}

\section{Missing Proofs of \Cref{sec:general}}\label{app8}
\begin{proof}[Proof of \Cref{lem:layer1dom} ]
For each node real node $v$, $v$ has at least $k$ real neighbors and in these $k$ real neighbors, in expectation at least $\frac{kq}{2t} = \Omega(\log n)$ virtual nodes are sampled, have layer number $1$ to $\frac{\NumOfLayers}{2}$, and join class $i$, i.e., are in $\mathcal{S}^i_{\frac{\NumOfLayers}{2}}$. Thus, the claim follows from a standard Chernoff argument combined with a union bound over all choices of $v$ and over all classes.
\end{proof}

\begin{proof}[Proof of \Cref{lem:CAL}]
As in the proof of \Cref{lem:paths}, using Menger's Theorem on graph $G$ along with the Domination Lemma (\Cref{lem:layer1dom}), we can infer that if connected component $\mathcal{C}$ is not alone in its class, then $\Psi(\mathcal{C})$ has at least $k$ internally vertex-disjoint potential connector paths connecting it to $\Psi(\mathcal{S}^i_\ell \setminus \mathcal{C})$, in graph $G$. Using rules (D) and (E) in \Cref{subsec:Connectors}, we get $k$ internally vertex-disjoint potential connector paths on the virtual nodes of layer $l+1$. It is clear that during the transition from the real nodes to the virtual nodes, the potential connector paths remain internally vertex-disjoint. Now, for each fixed potential connector path on the virtual nodes, the probability that the internal nodes of this path are sampled is at least $q^2$. Hence, in expectation, $\mathcal{C}$ has $kq^2$ internally vertex-disjoint connector paths (on virtual nodes). A simple application of Markov's inequality shows that with probability at least $1/2$, $\mathcal{C}$ has at least $t=\Omega(kq^2)$ internally vertex-disjoint connector paths.
\end{proof}

\begin{proof}[Proof of \Cref{lem:FML2}] Consider a class $i$ such that $N^i_\ell \geq 2$ and let $\mathcal{C}_1$ be a connected component of $\mathcal{G}[\mathcal{S}^i_\ell]$. We say component $\mathcal{C}_1$ is \emph{good} if for at least one connector path $p$ of $\mathcal{C}_1$, all internal nodes of $p$ --- one or two nodes depending on whether $p$ is short or long --- join class $i$. Note that if $\mathcal{C}_1$ is good, then it gets connected to another component of $\mathcal{G}[\mathcal{S}^i_\ell]$. In order to prove the lemma, we show that with probability at least $1/2$, at least $\frac{M_\ell}{3}$ connected components (summed up over all classes) are good. This is achieved using a simple accounting method by considering the number of remaining connector paths as the budget. 

We know that with probability at least $1/2$, initially we have a budget of $M_\ell \cdot t$. We will show that the greedy algorithm spends this budget in a manner that at the end, we get $M_{\ell+1}\leq \frac{5}{6} M_\ell$.

In each step of each of stages I or II, if respectively a type-$1$ node or a type-$3$ nodes and some associated type-$2$ nodes join a class, then at least $\Delta$ components become good where $\Delta$ is defined as explained in the algorithm description. We show that in that case, we remove at most $3\Delta t$ connector paths in the related bookkeeping part. Thus, in the accounting argument, we get that at most $3\Delta \cdot t$ amount of budget is spent and $\Delta$ components become good. Hence, in total over all steps, at least $\frac{M_\ell}{3}$ components become good. 

Let us first check the case of short connector paths, which is performed in stage I. Let $v$ be the new type-$1$ node under consideration in this step and suppose that the related $\Delta \geq 2$, and node $v$ joins class $i^*$. For class $i^*$, we remove all paths of all connected components of $i^*$ that have $v$ on their short connector paths. This includes $\Delta$ such connected components, and $t$ connector paths for each such component. Thus, in total we remove at most $\Delta \cdot t$ connector paths of components of class $i^*$. For each class $i \neq i^*$, we remove at most $\Delta$ connector paths. This is because $v$ can be on short connector paths of at most $\Delta$ components, at most once for each such component. These are respectively because of definition of $\Delta$ and due to internally vertex-disjointedness of connector paths of each component. There are less than $t$ classes other than $i^*$, so in total over all classes other than $i^*$, we remove at most $\Delta \cdot t$ connector paths. Therefore, we can conclude that the total amount of decrease in budget is at most $2\Delta \cdot t$.

Now we check the case of long connector paths, performed in stage II. Suppose that in this step, we are working on a type-$3$ new node $u$, it has $\Delta \geq 1$, and we assign node $u$ and associated type-$2$ new nodes $v_1$, $\dots$, $v_{\Delta}$ to class $i^*$. It follows from \Cref{prop:longpaths} that nodes $v_1$, $\dots$, $v_{\Delta}$ are not on long connector paths of components of class $i^*$ other than the $\Delta$ components which have long paths through $u$. Thus, any connector path of class $i^*$ that goes through any of $v_1$ to $v_{\Delta}$ also goes through $u$. For each component of class $i^*$ that has a long connector path through $u$, we remove all the connector paths. By definition of $\Delta$, there are $\Delta$ such components and from each such component, we remove at most $t$ paths. Hence, the number of such connector paths that are removed is at most $\Delta \cdot t$. On the other hand, for each class $i\neq i^*$, we remove at most $2\Delta$ connector paths. This is because by definition of $\Delta$, removing just node $u$ removes at most $\Delta$ long paths from each class. Moreover, because of \Cref{prop:longpaths} and internally vertex-disjointedness of connector paths of each component, removing each type-$2$ node $v_{j}$ (where $j \in \{1,2, \dots, \Delta\}$) removes at most one long connector path of one connected component of class $i \neq i^*$. Over all classes $i \neq i^*$, in total we remove at most $2\Delta \cdot t$ connector paths. Hence, when summed up with removed connector paths related to class $i^*$, we get that the total amount of decrease in the budget is at most $3\Delta \cdot t$.
\end{proof}

\section{Missing Proofs of \Cref{sec:partition}}\label{app9}

\begin{proof}[Proof of \Cref{lem:CAL-Part-Large}]
  Fix a class $i$. We start by studying the connected components of
  $G[V^i_1]$. Each node has probability $\frac{1}{tL} = \frac{\log
    n}{\delta\lambda k}$ to be in $V^i_1$. Therefore,
  $\E[|V^i_1|] = \frac{n\log n}{\delta\lambda k}$. Using a
  Chernoff bound we get that for $\delta$ sufficiently small, w.h.p.,
  $|V^i_1| \leq \frac{2n\log n}{\delta\lambda k}$. Moreover, since
  each node $v \in V$ has at least $k$ neighbors in $G$, the expected
  number of neighbors of $v$ in $V^i_1$ is at least $\Omega(\log
  n)$. Using another Chernoff bound (and $\delta$ sufficiently small)
  and then a union bound over all choices of $v$, we get that w.h.p.,
  each node $v \in V$ has $\Omega(\log n)$ neighbors in
  $V^i_1$. Therefore, in particular, each node $v \in V^i_1$ has
  $\Omega(\log n)$ neighbors in $V^i_1$. In other words, w.h.p., the
  degree of each node in $G[V^i_1]$ is $\Omega(\log n)$. Thus, w.h.p.,
  $G[V^i_1]$ has at most $\frac{2n\log n}{\delta\lambda k}$ nodes, each of
  degree $\Omega(\log n)$. Therefore, for an appropriate
  choice of the constant $\delta$, w.h.p., the number of connected
  components of $G[V^i_1]$ is at most $\eps\cdot\frac{n}{k}$ for a given
  constant $\eps>0$. Let $\Sigma^i$ be the set of all connected components of
  $G[V^i_1]$.
	
  We call each nonempty set $A$ which is a strict subset of $\Sigma^i$, i.e.,
  $A\subset \Sigma^i$, a \emph{component cut} of $G[V^i_1]$. Since, w.h.p., we
  have $|\Sigma^i| \leq \eps\frac{n}{k}$, the number of component cuts of
  $G[V^i_1]$ is at most $2^{\eps n/k}$, w.h.p.
	
	For each layer $\ell \in [2, \frac{L}{2}]$ and each component cut
  $A$ of $G[V^i_1]$, we say $A$ is \emph{$\ell$-rich} if there are at least $\frac{k}{8}$ internally
  vertex-disjoint paths $p$ which satisfy the following conditions:
  (1) $p$ has one end point $s \in A$ and the other endpoint $t \in
  S\setminus A$, (2) $p$ has at most two internal nodes, (3) if $p$
  has two internal nodes and has the form $s$, $u$, $w$, $t$, then $u$
  does not have a neighbor in $S-A$ and $w$ does not have a neighbor
  in $S$, (4) all internal nodes of $p$ are in layers $[\ell+1, L]$. 
	We first show that with high probability, for each layer $\ell \in [2, \frac{L}{2}]$ and each component cut
  $A$ of $G[V^i_1]$, $A$ is \emph{$\ell$-rich}.

  Consider an arbitrary component cut $A$ of $G[V^i_1]$. Since graph
  $G$ is $k$-vertex connected and because $V^i_1$ is a dominating set
  (cf.\ \Cref{lem:layer1dom}), there are at least $k$ internally
  vertex-disjoint paths which satisfy conditions (1) and (2). The
  details of this argument are similar to the first part of the
  proof of \Cref{lem:paths}. 
  Each of these $k$ paths can be \emph{trimmed} to also satisfy condition
  (3). To see this, consider a path $p$ as
  described in condition (3). If $v$ has a neighbor in $S\setminus A$ then
  there is a trimmed path $p'$ from some node in $A$ to $v$ to some
  node in $S\setminus A$, which satisfies (3). Similarly, if $w$ has a neighbor in $A$, then there is a
  trimmed path $p'$ from $A$ to $w$ to some node in $S\setminus A$, which
  satisfies (3). In either case, the trimmed path $p'$ remains internally vertex-disjoint from
  the other $k-1$ paths.

  Thus far, we have found $k$ internally vertex-disjoint paths of
  component cut $A$ satisfying conditions (1) to (3). Now, each
  internal node $u$ on each of these $k$ paths has probability at
  least $\frac{1}{2}$ to be in one of layers $[\ell+1, L]$. Formally,
  this is because, the layer number of $u$ is chosen randomly and so
  far, the only information exposed about $u$ is that it is not in
  $V^i_1$. Let $\layer(u)$ be the layer number of $u$. We get
  	\begin{eqnarray*}
    \Pr[\layer(u) \in [\ell+1, L]| u \notin V^i_1] & = & \frac{\Pr[\layer(u) \in [\ell+1, L] \land u \notin V^i_1]}{\Pr[u \notin V^i_1]}\\
    & = & \frac{\Pr[\layer(u) \in [\ell+1, L]]}{\Pr[u \notin V^i_1]} = \frac{\frac{L-\ell}{L}}{1-\frac{1}{Lt}} \geq \frac{L-\ell}{L} \geq \frac{1}{2},
  \end{eqnarray*}
  where the last inequality holds because $\ell \leq \frac{L}{2}$. 
	Hence, each internal node $u$ on each path $p$ out of
  the $k$ internally vertex-disjoint paths for $A$ has probability at
  least $1/2$ to be in layer $[\ell+1, L]$. The expected number of
  paths that also satisfy condition (4) therefore is at least
  $\frac{k}{4}$. Since the paths are internally vertex-disjoint and
  layer numbers are chosen independently, we can use a Chernoff bound
  and conclude that with probability at least $1 - e^{-k/32}$, at
  least $\frac{k}{8}$ of the paths satisfy condition
  (4). Consequently, with probability at least $1 - e^{-k/32}$, the
  component cut $A$ of $G[V^i_1]$ is $\ell$-rich.

  Now there are at most $2^{\eps n/k}$ component cuts for
  $G[V^i_1]$. Thus, using a union bound, the probability that there
  exists one of them that is not $\ell$-rich is at most $2^{\eps n/k}
  \cdot e^{-k/48}$. Since $k = \Omega(\sqrt{n})$, for $\eps$ small
  enough, this probability is less than
  $2^{-\Omega(\sqrt{n})}=2^{-\omega(\log n)}$. Hence, w.h.p., each
  component cut of $G[V^i_1]$ is $\ell$-rich. Using a union bound over
  all choices of $\ell \in [2, \frac{L}{2}]$, we can also conclude that
  for each such $\ell$, each component cut of $G[V^i_1]$ is $\ell$-rich.

  We are now ready to show that each connected component $\mathcal{C}$
  of $G[V^i_{\ell}]$ has at least $\Omega(k/\NumOfLayers^2)$ internally
  vertex-disjoint connector paths. Note that each component $\mathcal{C}$ is composed of
  a subset of the connected components of $G[V^i_1]$ and some nodes of layers $2,\dots,\ell$. Hence, there is a component cut $A'$ of
  $G[V^i_1]$ such that $\mathcal{C} \cap V^i_1 = A'$. Since the component
  cut $A'$ is $\ell$-rich w.h.p., there are at least $\frac{k}{8}$
  paths which satisfy conditions (1) to (4) above. Each such
  path $p$ satisfies conditions (A), (B). However, path $p$ might not directly satisfy condition (C). This is
  because, it is possible that $p$ is defined as $s$, $v$, $w$, $t$
  which satisfies condition (3) but in graph $G[V^i_\ell]$, node $v$ has
  a neighbor in $W^i_\ell - \mathcal{C}$ or node $w$ has a neighbor in
  $\mathcal{C}$. Though, in either case, we can get a path
  $p'$ of length $2$ which satisfies conditions (A), (B), and
  (C) with just one internal node, either $v$ or $w$. Note that
  $p'$ still remains internally-vertex-disjoint from the other paths. Moreover, the internal node of $p'$ is in layers
  $[\ell+1, L]$.

  So far, we have found $\frac{k}{8}$ potential connectors for
  $\mathcal{C}$ which have their internal nodes in layers in
  $[\ell+1, L]$. For each internal node $v$ of any of these paths, it
  holds that $v$ is in layer $\ell+1$ with probability greater than
  $\frac{1}{L}$. Formally this is true because, given that node $v$ is
  in a layer in $[\ell+1, L]$, the exact layer number of $v$ can be
  chosen after making all the decisions for the first $\ell$
  layers. Therefore, each of the paths that we have found so far has
  probability at least $\frac{1}{L^2}$ to satisfy the corresponding rule
  conditions (D) or (E) such that it would be a connector
  path for $\mathcal{C}$.
  %
  Hence, the expected number of connector paths of $\mathcal{C}$ is at
  least $\frac{k}{8\NumOfLayers^2}$. Since the potential connectors we
  found (which have internal nodes in layers $\ell+1$ to $L$) are
  internally vertex-disjoint, the events of them being connector paths
  for $\mathcal{C}$ are independent. Thus, using a Chernoff bound
  we get that w.h.p., $\mathcal{C}$ has at least
  $\Omega(k/\NumOfLayers^2)=\Omega(k/\log^2n)=\omega(\log n)$
  connector paths.  A union bound over all connected components of
  class $i$, over all layers $\ell \in [2, \frac{L}{2}]$, and over all
  choices of the class $i$ completes the proof.
\end{proof}

\end{document}